\renewcommand{\mkbegdispquote}[2]{\itshape}
\newcommand{\ang}[1]{\left\langle  #1 \right\rangle }
\newcommand{\cA}{\mathcal{A}}
\newcommand{\E}{\mathbb{E}}
\newcommand{\R}{\mathbb{R}}
\newcommand{\cS}{{\mathcal S}}
\newcommand{\cP}{{\mathcal P}}
\newcommand{\tr}{{\rm{tr}}}
\newcommand{\one}{\mathbf{1}}
\newcommand{\cX}{{\mathcal X}}
\newcommand{\cY}{{\mathcal Y}}
\newcommand{\cM}{{\mathcal M}}
\newtheorem{theorem}{Theorem}
\newtheorem{definition}[theorem]{Definition}
\newtheorem{lemma}[theorem]{Lemma}
\newtheorem{proposition}[theorem]{Proposition}
\newtheorem{remark}[theorem]{Remark}
\theoremstyle{definition}
\numberwithin{equation}{section}
\numberwithin{theorem}{section}
\begin{document}

\title{The McCormick martingale optimal transport}

\author{
	Erhan Bayraktar \thanks{Department of Mathematics, University of Michigan, Ann Arbor, Email: erhan@umich.edu. Erhan Bayraktar is partially supported by the National Science Foundation under grant DMS-2106556 and DMS-2507940, and by the Susan M. Smith chair.}
	\and Bingyan Han \thanks{Thrust of Financial Technology, The Hong Kong University of Science and Technology (Guangzhou), Email: bingyanhan@hkust-gz.edu.cn. Bingyan Han is partially supported by the National Natural Science Foundation of China (grant 12401621) and the Guangzhou-HKUST(GZ) Joint Funding Program (2024A03J0630).}
	\and Dominykas Norgilas \thanks{Department of Mathematics, North Carolina State University, Email: dnorgil@ncsu.edu.}
}

\date{February 25, 2026}
\maketitle

\begin{abstract}
    Martingale optimal transport (MOT) often yields broad price bounds for options, constraining their practical applicability. In this study, we extend MOT by incorporating causality constraints among assets, inspired by the nonanticipativity condition of stochastic processes. This, however, introduces a computationally challenging bilinear program. To tackle this issue, we propose McCormick relaxations to ease the bicausal formulation and refer to it as McCormick MOT. The primal attainment and strong duality of McCormick MOT are established under standard assumptions. Empirically, we apply McCormick MOT to basket and digital options. With natural bounds on probability masses, the average price reduction for basket options is approximately 1.08\% to 3.90\%. When tighter probability bounds are available, the reduction increases to 12.26\%, compared to the classic MOT, which also incorporates tighter bounds. For most dates considered, there are basket options with suitable payoffs, where the price reduction exceeds 10.00\%. For digital options, McCormick MOT results in an average price reduction of over 20.00\%, with the best case exceeding 99.00\%.
	\\[2ex] 
	\noindent{\textbf {Keywords}: Martingale optimal transport, causal optimal transport, bilinear constraints, McCormick relaxations, strong duality}
	\\[2ex]
	\noindent{\textbf {Mathematics Subject Classification:} 91G20, 60G42, 90C46.} 
\end{abstract}

\section{Introduction}
In the realm of option pricing, a crucial task is to establish model-independent arbitrage-free price bounds for derivatives. These bounds serve a direct application in identifying potential arbitrage opportunities in traded options. \cite{hobson2012robust} pioneered the study on robust bounds for forward start options, while Martingale Optimal Transport (MOT), introduced by \cite{beiglbock2013model} and \cite{GHLT2014MOTcontinuous} (in discrete and continuous time settings, respectively), presented a comprehensive framework utilizing tools from Monge--Kantorovich mass transport. The strong duality result has a financial interpretation, yielding sub/super hedging portfolios.  Furthering the MOT research line, \cite{beiglbock2017complete,beiglbock2019dual} considered dual attainment problems. \cite{beiglbock2016problem} identified a martingale coupling reminiscent of the classic monotone quantile coupling. Several numerical schemes for MOT were proposed by \cite{guo2019mot,eckstein2021robust,eckstein2021computation}, employing techniques such as linear programming (LP) or neural networks. Investigations into MOT stability were carried out by \cite{backhoff2022stability,bruckerhoff2022,wiesel2023continuity}.

A significant hurdle in MOT arises from the wide price bounds, thereby constraining their practical applicability. This claim is empirically substantiated by \citet[Figure 1]{beiglbock2013model} and \citet[Tables 4.2 and 4.3]{eckstein2021robust}. Achieving more precise bounds typically requires the incorporation of additional information into the modeling framework. For example, \cite{lutkebohmert2019tightening} argued that tighter bounds can be derived when more information about the variance of returns is available, although at the cost of sacrificing model independence. \cite{fahim2016model} enhanced precision of price bounds by integrating liquid exotic options into the hedging portfolio. Furthermore, \citet[Section 4]{eckstein2021robust} refined the bounds by assuming knowledge of the marginals at earlier maturities. For further techniques on tightening robust price bounds or developing computational methods for MOT, we refer readers to \cite{ansari2024improved,eckstein2021martingale,neufeld2021model,neufeld2023model} and the references therein.

In this study, we introduce another methodology inspired by recent strides in causal optimal transport (COT) \cite{lassalle2013causal,backhoff2017causal}. COT posits a crucial condition: given the past of one process $X$, the past of another process $Y$ should be independent of the future of $X$ under the transport plan. Essentially, COT imposes the nonanticipativity condition. The versatility of COT is evidenced by its applications in various domains such as mathematical finance \cite{backhoff2020adapted}, mean field games \cite{acciaio2021cournot,backhoff2023dynamic}, and stochastic optimization \cite{pflug2012distance,pflug2014multistage,acciaio2020causal}.

In Section \ref{sec:causal}, we present the causality constraint in a two-asset setting. An essential challenge arises as the MOT incorporating causality transforms into a bilinear program that is non-convex. This inherent difficulty stems from the fact that only marginals at each maturity are known in MOT. However, causality requires conditional kernels that rely on the joint distribution across different maturities, detailed in Proposition \ref{prop:equi}. First, Section \ref{sec:bicausalMOT} presents several standard results for the bicausal MOT in the finite discrete case, including primal attainment, SDP relaxations, and dual problems. Next, to address the computational challenges, we relax the original problem and consider McCormick envelopes \citep{mccormick1976} for the bilinear equality constraints. We refer to it as the McCormick MOT. Compared to SDP relaxations and other optimization techniques in \cite{audet2000branch,sherali2002enhancing,anstreicher2009semidefinite,aspremont2003relaxations,vandenberghe1996semidefinite}, McCormick relaxations offer a straightforward implementation and lead to an LP problem, making the theoretical properties also more tractable.

In discrete scenarios, McCormick MOT is a finite-dimensional LP problem and thus straightforward. Our main technical contributions are for the continuous case with absolutely continuous densities. To derive McCormick envelopes, we assume that lower and upper bounds for densities are available, which are referred to as capacity constraints in \cite{korman2015optimal}. The primal attainment is established in Theorem \ref{thm:primal} after selecting a suitable weak topology. Analogously to the classic MOT, a natural question is the strong duality of the McCormick MOT. Lemma \ref{lem:capa} and Theorem \ref{thm:dual} prove the relevant result under standard assumptions in the MOT literature. The main idea underlying these proofs remains rooted in the Fenchel-Rockafellar duality theorem.

In our numerical study, we introduce a calibration method for risk-neutral densities that ensures convex order while addressing bid-ask spreads and multiple tenors. Given the discrete nature of empirical data, our approach formulates a finite-dimensional LP problem, offering a straightforward optimization process. The improvement in price bounds depends on specific option payoffs and, in some cases, the liquidity of European options on the underlying assets. We primarily analyze basket and digital options. With the natural lower and upper bounds implied by marginal constraints, McCormick relaxations reduce the price gap for equally weighted basket options by an average of 1.08\% for stocks with liquid option markets and 3.90\% for those with moderately liquid markets. When tighter probability bounds are applied, Section \ref{sec:tighter_bnds} reports an average reduction of 12.26\%, compared to the classic MOT, which also incorporates tighter bounds. Furthermore, Section \ref{sec:arbitrary_w} shows that for most dates considered, there exists a basket option with a price reduction exceeding 10.00\%. Additionally, for most basket options studied, there exist marginals that lead to a price reduction greater than 10.00\%. For digital options, Section \ref{sec:digital} finds that McCormick MOT achieves an average price reduction of over 20.00\%, with the best case exceeding 99.00\%. While McCormick MOT may not always improve price bounds for digital options, when it does, the reduction is substantial. Besides, our methodology is {\it universal} since it integrates seamlessly with other knowledge available to the agent. The source code is publicly available at \url{https://github.com/hanbingyan/McCormick}.

The rest of the paper is organized as follows. Section \ref{sec:form} introduces the notation and formulation of MOT. Section \ref{sec:causal} discusses the motivation of the causality constraint  and its implications. Section \ref{sec:bicausalMOT} presents several theoretical results for bicausal MOT in the finite discrete case. In Section \ref{sec:McCormick}, we demonstrate the McCormick relaxations in both discrete and continuous cases. Section \ref{sec:num} shows the improvements of the price bounds with illustrative examples and empirical data. The appendix gives proofs of the main results.

\section{Problem formulation}\label{sec:form}
We consider a financial market comprising two risky assets, denoted $X$ and $Y$, while the extension to include more assets is straightforward. Suppose that the market has European options on each of the two stocks for the maturities $\{T_1, \ldots, T_N\}$, where $N \geq 2$ and the maturities are not necessarily evenly spaced. Let $X_t := X_{T_t}$ and $Y_t := Y_{T_t}$ represent the prices of risky assets at maturity $T_t$. The range of the first risky asset $X$ at time $T_t$ is denoted as $\cX_t$, assumed to be a closed subset of $\R$. Consequently, for each $t\in\{1,...,N\}$, $\cX_{1:t} : = \cX_1 \times \ldots \times \cX_t$ forms a closed subset of $\R^{t}$. We set $\cX:=\cX_{1:N}$. The set of all Borel probability measures on $\cX_t$ is denoted as $\cP(\cX_t)$. Similarly, for the second risky asset $Y$, we introduce another closed sets $\cY_{1:t} = \cY_1 \times \ldots \times \cY_t$, with the corresponding notations defined analogously.

In this study, we assume that individual risk-neutral probabilities denoted $X_t \sim \mu_t \in \cP(\cX_t)$ and $Y_t \sim \nu_t \in \cP(\cY_t)$, are known. Specifically, if the prices of European options are available for all strikes, \cite{breeden1978prices} provided a formula for the risk-neutral probability density. In cases involving a finite number of strikes, a linear interpolation method is described in \cite{davis2007range}. Additionally, an alternative cubic-spline method has been proposed by \cite{fengler2009arbitrage} for finitely many options prices. \citet[Corollary 3.15]{neufeld2022numerical} introduces an approximation method with theoretical convergence guarantees.

However, the joint distribution among different maturities or assets is typically unknown. Let $\Pi(\bar{\mu}) = \Pi(\mu_1, \ldots, \mu_N)$ represent the set of couplings $\mu \in \cP(\cX_{1:N})$ with $\mu_t$ as marginals for each $t$; $\Pi(\bar{\nu}) = \Pi(\nu_1, \ldots, \nu_N)$ is defined similarly. Consider $\Pi(\bar{\mu}, \bar{\nu})$ as the set of couplings $\pi \in \cP(\cX_{1:N} \times \cY_{1:N})$ with $\mu_t$ and $\nu_t$ as marginals. The model-independent and arbitrage-free framework in \cite{beiglbock2013model} involves examining a measure $\pi \in \Pi(\bar{\mu}, \bar{\nu})$ that satisfies the martingale condition:
\begin{equation}\label{eq:mart}
	\E_\pi [X_{t+1} | X_{1:t}, Y_{1:t}] = X_t,  \quad	\E_\pi [Y_{t+1} | X_{1:t}, Y_{1:t}]  = Y_t, \quad 1 \leq t \leq N-1.
\end{equation} 
For the sake of simplicity, we assume zero interest rates and no dividends in \eqref{eq:mart}, while our empirical results utilize forward prices to incorporate these factors. Denote $\cM(\bar{\mu}, \bar{\nu}) \subset \Pi(\bar{\mu}, \bar{\nu})$ as the set of all martingale measures satisfying marginal constraints. Suppose $\{\mu_t\}^N_{t=1}$ possess identical finite first moments and increase in convex order in $t$, that is, for all convex functions $f$ and $1 \leq t \leq N-1$,
\begin{align*}
	\int f d\mu_t \leq \int f d\mu_{t+1}. 
\end{align*}
Denote the convex order condition as $\mu_t \preceq_{cx} \mu_{t+1}$. Similarly, assume $\{\nu_t\}^N_{t=1}$ have the same finite moments and $\nu_t \preceq_{cx} \nu_{t+1}$. By \cite{strassen1965existence}, the assumptions above are necessary and sufficient conditions for $\cM(\bar{\mu}, \bar{\nu}) \neq \emptyset$.

In the subsequent discussion, we examine an exotic option with the payoff function $c(x_{1:N}, y_{1:N})$. The MOT framework \cite{beiglbock2013model,eckstein2021robust} investigates model-free bounds for this exotic option:
\begin{equation}\label{eq:MOT}
	\sup_{\pi \in \cM(\bar{\mu}, \bar{\nu})} \int c(x_{1:N}, y_{1:N}) \pi(dx_{1:N}, dy_{1:N}) \; \text{ and } \inf_{\pi \in \cM(\bar{\mu}, \bar{\nu})} \int c(x_{1:N}, y_{1:N}) \pi(dx_{1:N}, dy_{1:N}).
\end{equation} 
\eqref{eq:MOT} is known as the primal formulation. The dual problem unveils semi-static hedging strategies as outlined in \citet[Theorem 1.1]{beiglbock2013model}.

In the absence of modeling assumptions, the gap between lower and upper bounds in MOT can be substantial, thereby limiting its practical applicability. A common approach to address this challenge involves incorporating additional information. For instance, \cite{fahim2016model} augmented the model with market prices of specific exchange-listed exotic options, such as standardized digital and barrier options. Another strategy entails including information related to volatility. Joint calibration on SPX and VIX options leverages the structural link between the S\&P 500 and VIX indices; see \cite{guyon2017bounds,de2015linking} for more details. Additionally, \cite{bayraktar2021transport} examined MOT examples with volatility constraints, while \cite{lutkebohmert2019tightening} considered supplementary information regarding the variance of returns.

The conventional optimal transport framework fails to account for the temporal structure and information flow inherent in time series data. To address this limitation, recent research has introduced COT \cite{lassalle2013causal,backhoff2017causal}. When dealing with options on multiple assets, we demonstrate that the nonanticipativity (causality) condition provides an alternative means of tightening bounds, typically without requiring new data or estimation. Section \ref{sec:causal} presents the causality constraint and its implications for MOT.

\section{Motivation and implication of causality}\label{sec:causal}
\subsection{Nonanticipativity condition}
Under a risk-neutral measure $\pi$ with discrete-time processes, the underlying assets $X$ and $Y$ are typically represented as:
\begin{equation}\label{eq:binomial}
	X_{t+1} - X_{t} = X_t W_{t+1}, \qquad Y_{t+1} - Y_{t} = Y_t Z_{t+1}. 
\end{equation}  
Here, $\{ W_t \}^{N-1}_{t=1}$ and $\{ Z_t \}^{N-1}_{t=1}$ denote random returns for $X$ and $Y$, respectively, which are the only sources of randomness in the system \eqref{eq:binomial}. A standard assumption posits that $\{ W_t \}^{N-1}_{t=1}$ are independent and identically distributed, as commonly imposed in binomial tree models. $\{ Z_t \}^{N-1}_{t=1}$ also satisfy the same assumption. However, it should be noted that for each time $1 \leq t \leq N-1$, $W_t$ and $Z_t$ are allowed to be correlated. For example, a parsimonious assumption is that the correlation between $W_t$ and $Z_t$ is a constant. Under these conditions, $X_{t+1}$ is independent of $Y_t$, conditional on $X_t$. This equivalently implies that $\pi(dx_{t+1} | x_t, y_t) = \pi(dx_{t+1} | x_t)$, since knowledge of $Y_t$ does not aid in predicting $W_{t+1}$ once $X_t$ is known. It is crucial to emphasize that $Y_t$ and $X_{t+1}$ are not independent if $X_t$ is unknown. 

Aligned with the nonanticipativity condition that originates from stochastic difference (or differential) equations, \cite{lassalle2013causal} imposes that a transport plan $\pi$ should satisfy
\begin{equation}\label{eq:causal1}
	\pi (dx_{t+1} | x_{1:t}, y_{1:t}) = \pi (dx_{t+1} |x_{1:t}), \quad  t = 1, ..., N-1, \quad \text{$\pi$-a.s.},
\end{equation}
which is equivalent to 
\begin{equation}\label{eq:causal2}
	\pi (dy_t | x_{1:N}) = \pi (dy_t | x_{1:t}), \quad  t = 1, ..., N-1, \quad \text{$\pi$-a.s.},
\end{equation}
see \citet[Remark 2.2]{eckstein2024computational}. The property \eqref{eq:causal1} (or \eqref{eq:causal2}) is known as the causality condition. A transport plan that satisfies \eqref{eq:causal1} (or \eqref{eq:causal2}) is termed {\it causal}. If we interchange the positions of $X$ and $Y$ the following condition holds:
\begin{equation}\label{eq:anticausal}
	\pi (dx_t | y_{1:N}) = \pi (dx_t | y_{1:t}), \quad  t = 1, ..., N-1, \quad \text{$\pi$-a.s.},
\end{equation}
then the transport plan is {\it anticausal}. Couplings that are both causal and anticausal are referred to as bicausal. For later use, denote $\cM_{bc}(\bar{\mu}, \bar{\nu}) \subset \cM(\bar{\mu}, \bar{\nu})$ as the set of all bicausal and martingale transport plans, with $\mu_t$ and $\nu_t$ as marginals for each $t$. As a side note, since we assume only marginals $\mu_t$ and $\nu_t$ are known, the standard result regarding the decomposition into irreducible components on the real line (for example, as given in \citet[Proposition 2.3]{beiglbock2017complete}) applies to our setting.

Motivated by the nonanticipativity condition, we propose to further restrict the MOT problem \eqref{eq:MOT} by considering bicausal plans:
\begin{equation}\label{eq:bi_MOT}
	\begin{aligned}
		& \sup_{\pi \in \cM_{bc}(\bar{\mu}, \bar{\nu})} \int c(x_{1:N}, y_{1:N}) \pi(dx_{1:N}, dy_{1:N}) \\
		& \text{ and } \inf_{\pi \in \cM_{bc}(\bar{\mu}, \bar{\nu})} \int c(x_{1:N}, y_{1:N}) \pi(dx_{1:N}, dy_{1:N}).
	\end{aligned}
\end{equation} 

Another motivation for considering bicausality in MOT arises from its connection to the no-arbitrage condition and market completeness \cite{krvsek2024general}. Suppose the classic MOT minimization problem in \eqref{eq:MOT} has an optimizer $\pi^*$, where bicausality conditions are not imposed. Let $\mu^* \in \cP(\cX_{1:N})$ and $\nu^* \in \cP(\cY_{1:N})$ be the marginals of $\pi^*$ on $\cX_{1:N}$ and $\cY_{1:N}$, respectively. Then $\pi^* \in \Pi(\mu^*, \nu^*)$. \citet[Section 3.1]{krvsek2024general} established that if the market with assets $X$ and $Y$ is complete, then the coupling $\pi^*$ satisfies the no-arbitrage condition in \citet[Definition 3.1]{krvsek2024general} if and only if the coupling $\pi^*$ is bicausal. The definition of completeness in \cite{krvsek2024general} depends on the marginals $\mu^*$ and $\nu^*$. Let $\mathbb{F}^X$ denote the canonical filtration of $X_{1:N}$. A market is said to be complete, if for every $(\mathbb{F}^X, \mu^*)$-martingale $M^X$, there exists an $\mathbb{F}^X$-adapted process $\Delta^X = \{\Delta^X_t\}^{N-1}_{t=0}$, such that
\begin{equation*}
	M^X_t = \E_{\mu^*}[M^X_1] + \sum^t_{s=1} \Delta^X_{s-1} (X_s - X_{s-1}), \quad \text{$\mu^*$-a.s.}, \quad t \in \{0, \ldots, T\}.
\end{equation*}
The same condition is imposed when replacing $X$ with $Y$ and $\mu^*$ with $\nu^*$; see \citet[Equation 3.1]{krvsek2024general} for details. Although the marginals $\mu^*$ and $\nu^*$ across different maturities are not specified in advance, this result still offers an alternative perspective for interpreting bicausality conditions.

Before presenting McCormick relaxations, we discuss some properties for the bicausal MOT \eqref{eq:bi_MOT}. The first consequence of the bicausal condition is the equivalence of the martingale property under individual filtrations and the joint filtration. The proof of Proposition \ref{prop:joint_indiv} directly applies the causality condition \eqref{eq:causal1} and is therefore omitted. Besides, the tower property can be used to establish the converse direction, showing that \eqref{eq:mart} implies \eqref{eq:mart_ind}. However, the tower property cannot generally prove that \eqref{eq:mart_ind} implies \eqref{eq:mart}.
\begin{proposition}\label{prop:joint_indiv}
	Suppose $\pi \in \Pi(\bar{\mu}, \bar{\nu})$ and the following martingale condition holds with the filtration generated by each asset:  
	\begin{equation}\label{eq:mart_ind}
		\E_\pi [X_{t+1} | X_{1:t}]  = X_t, \qquad \E_\pi [Y_{t+1} | Y_{1:t}] = Y_t, \quad 1 \leq t \leq N-1.
	\end{equation} 
	If $\pi$ is bicausal, then \eqref{eq:mart_ind} implies that the martingale condition \eqref{eq:mart} under the filtration generated by $(X, Y)$ also holds. 
\end{proposition}

\subsection{Non-convexity}
When the joint distributions and, consequently, conditional kernels $\mu(dx_{t+1}|x_{1:t})$ are provided, the causality condition \eqref{eq:causal1} (or \eqref{eq:causal2}) imposes a linear constraint on $\pi$; see \citet[Proposition 2.4 (3)]{backhoff2017causal}. However, this is no longer applicable when the conditional kernels are not predetermined. The corresponding equivalent condition is outlined in Proposition \ref{prop:equi}. The proof follows a rationale similar to that of \citet[Proposition 2.4 (3)]{backhoff2017causal}, and we include it in the Appendix for the sake of completeness.
\begin{proposition}\label{prop:equi}
	$\pi \in \Pi(\bar{\mu}, \bar{\nu})$ is causal if and only if  
	\begin{equation}\label{eq:nonconvex}
		\int h_t(y_{1:t})\left [ g_t(x_{1:N}) - \int g_t(x_{1:t}, \bar{x}_{t+1:N}) \pi(d\bar{x}_{t+1:N} | x_{1:t}) \right ] \pi(dx_{1:N}, dy_{1:N}) = 0, 
	\end{equation}
	for all $1 \leq t \leq N$, $h_t \in C_b(\cY_{1:t})$, and $g_t \in C_b (\cX_{1:N})$.
\end{proposition}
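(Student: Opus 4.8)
The plan is to rephrase \eqref{eq:nonconvex} as a statement about conditional laws. Let $\mu:=\pi\circ\mathrm{proj}_{\cX}^{-1}$ be the $\cX$-marginal of $\pi$; then $\pi(d\bar x_{t+1:N}|x_{1:t})$ is the regular conditional distribution of $X_{t+1:N}$ given $X_{1:t}$ (it only involves $\mu$), and the inner term $\int g_t(x_{1:t},\bar x_{t+1:N})\,\pi(d\bar x_{t+1:N}|x_{1:t})$ is a version of $\E_\pi[g_t(X_{1:N})|X_{1:t}]$. Thus \eqref{eq:nonconvex} for a given pair $(h_t,g_t)$ is the identity
\[
\E_\pi\big[h_t(Y_{1:t})\,g_t(X_{1:N})\big]=\E_\pi\big[h_t(Y_{1:t})\,\E_\pi[g_t(X_{1:N})|X_{1:t}]\big].
\]
Since $\cX_{1:N}$ and $\cY_{1:N}$ are closed subsets of Euclidean spaces, they are Polish, so all regular conditional distributions used below exist and obey the usual chain rule; also the case $t=N$ of \eqref{eq:nonconvex} is vacuous (the bracket vanishes), so only $1\le t\le N-1$ carries content, matching \eqref{eq:causal1}.

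For the implication ``causal $\Rightarrow$ \eqref{eq:nonconvex}'', fix $t$ and first bootstrap the one-step condition \eqref{eq:causal1} to the multi-step statement $\pi(dx_{t+1:N}|x_{1:t},y_{1:t})=\pi(dx_{t+1:N}|x_{1:t})$ $\pi$-a.s. I would prove by induction on $k=1,\dots,N-t$ that $\pi(dx_{t+1:t+k}|x_{1:t},y_{1:t})=\pi(dx_{t+1:t+k}|x_{1:t})$: the base case $k=1$ is exactly \eqref{eq:causal1} at time $t$; for the step, disintegrate the left-hand side as $\pi(dx_{t+1:t+k}|x_{1:t},y_{1:t})\,\pi(dx_{t+k+1}|x_{1:t+k},y_{1:t})$, apply the induction hypothesis to the first factor, and note that the second factor equals $\pi(dx_{t+k+1}|x_{1:t+k})$, because \eqref{eq:causal1} at time $t+k$ gives $\pi(dx_{t+k+1}|x_{1:t+k},y_{1:t+k})=\pi(dx_{t+k+1}|x_{1:t+k})$ and averaging this over $y_{t+1:t+k}$ (given $x_{1:t+k},y_{1:t}$) removes the remaining $Y$-dependence; the two factors then recombine, by the chain rule, to $\pi(dx_{t+1:t+k+1}|x_{1:t})$. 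With the multi-step identity in hand, $\E_\pi[g_t(X_{1:N})|X_{1:t},Y_{1:t}]=\int g_t(X_{1:t},\bar x_{t+1:N})\,\pi(d\bar x_{t+1:N}|X_{1:t})$ for every $g_t\in C_b(\cX_{1:N})$, and taking $\E_\pi[h_t(Y_{1:t})\,\cdot\,]$ of both sides (tower property) yields \eqref{eq:nonconvex}.

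For the converse, fix $t\in\{1,\dots,N-1\}$ and plug into \eqref{eq:nonconvex} the test function $g_t(x_{1:N}):=\alpha(x_{1:t})\,b(x_{t+1})\in C_b(\cX_{1:N})$, with $\alpha\in C_b(\cX_{1:t})$, $b\in C_b(\cX_{t+1})$, together with an arbitrary $h_t\in C_b(\cY_{1:t})$. Because the inner term is then $\alpha(x_{1:t})\,\beta(x_{1:t})$ with $\beta(x_{1:t}):=\int b(\bar x_{t+1})\,\pi(d\bar x_{t+1:N}|x_{1:t})=\E_\pi[b(X_{t+1})|X_{1:t}=x_{1:t}]$, the identity \eqref{eq:nonconvex} reads $\E_\pi\big[\alpha(X_{1:t})\,h_t(Y_{1:t})\,\big(b(X_{t+1})-\beta(X_{1:t})\big)\big]=0$. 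As $\alpha$ and $h_t$ range over bounded continuous functions, these products span an algebra that contains the constants, separates points, and generates $\sigma(X_{1:t},Y_{1:t})$; a functional monotone-class argument therefore extends the identity to $\E_\pi[\mathbf{1}_G\,(b(X_{t+1})-\beta(X_{1:t}))]=0$ for all $G\in\sigma(X_{1:t},Y_{1:t})$, i.e.\ $\E_\pi[b(X_{t+1})|X_{1:t},Y_{1:t}]=\beta(X_{1:t})=\E_\pi[b(X_{t+1})|X_{1:t}]$ $\pi$-a.s. Letting $b$ run through $C_b(\cX_{t+1})$ identifies the two regular conditional distributions, which is \eqref{eq:causal1} at time $t$; since $t$ was arbitrary, $\pi$ is causal.

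I expect the inductive bootstrap in the first implication — passing from the one-step causality \eqref{eq:causal1} to the multi-step conditional independence $\pi(dx_{t+1:N}|x_{1:t},y_{1:t})=\pi(dx_{t+1:N}|x_{1:t})$ — to be the main obstacle: it needs careful bookkeeping with disintegrations and the justification that the intermediate $Y$-coordinates may be integrated out, and it is precisely the ingredient that mirrors \citet[Proposition 2.4]{backhoff2017causal}. The monotone-class/density step in the converse and the remaining measure-theoretic identifications are routine by comparison.
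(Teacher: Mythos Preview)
Your argument is correct, but it follows a different route from the paper's. You work directly with the one-step formulation \eqref{eq:causal1} and, for the forward implication, build up the multi-step conditional independence $\pi(dx_{t+1:N}\mid x_{1:t},y_{1:t})=\pi(dx_{t+1:N}\mid x_{1:t})$ by an induction over the future coordinates; for the converse you specialise $g_t$ to a product $\alpha(x_{1:t})b(x_{t+1})$ and invoke a monotone-class argument. The paper instead starts from the $y$-side characterisation \eqref{eq:causal2}: setting $f^h(x_{1:N}):=\int h_t(y_{1:t})\,\pi(dy_{1:t}\mid x_{1:N})$, causality is equivalent to $f^h(x_{1:N})=\E_\mu[f^h(X_{1:N})\mid X_{1:t}]$ for all $h_t$, which when tested against $g_t\in C_b(\cX_{1:N})$ gives $\E_\mu\big[g_t\,(f^h-\E_\mu[f^h\mid X_{1:t}])\big]=0$. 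The key step is then a symmetric ``interchange of expectations'' (via the tower property, $\E[g_t\,\E[f^h\mid X_{1:t}]]=\E[f^h\,\E[g_t\mid X_{1:t}]]$), which swaps the roles of $g_t$ and $f^h$ and yields \eqref{eq:nonconvex} directly after one more tower step. This bypasses both your inductive bootstrap and the monotone-class argument: the multi-step issue is absorbed into the known equivalence \eqref{eq:causal1}$\Leftrightarrow$\eqref{eq:causal2}, and the converse comes for free because every arrow in the paper's chain is an equivalence. Your approach is more self-contained (it does not lean on the equivalence of the two causality formulations), while the paper's is shorter and highlights the symmetry between $h_t$ and $g_t$ that makes \eqref{eq:nonconvex} natural.
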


To observe that the causality constraint is no longer linear in $\pi$, one can think about the discrete case. The conditional mass function $\pi(x_{t+1:N} | x_{1:t}) = \pi(x_{1:t}, x_{t+1:N})/\pi(x_{1:t})$, where the denominator $\pi(x_{1:t})>0$ is not uniquely determined by marginals $(\mu_t,\nu_t)_{1\leq t\leq N}$. In the subsequent section, we demonstrate that this constraint is bilinear in $\pi$ and non-convex. This characteristic poses a primary challenge in the bicausal MOT and makes the program difficult to solve.

Alternatively, we can also rewrite \eqref{eq:bi_MOT} as a two-stage optimization problem. By Proposition \ref{prop:joint_indiv}, if $\mu \in \cM(\bar{\mu})$ and $\nu \in \cM(\bar{\nu})$, then $\pi \in \Pi_{bc}(\mu, \nu)$ implies that the martingale condition \eqref{eq:mart} under the filtration generated by $(X, Y)$ also holds. Hence, the minimization problem in \eqref{eq:bi_MOT} becomes
\begin{equation}\label{eq:two_stage}
	\begin{aligned}
		& \inf_{\pi \in \cM_{bc}(\bar{\mu}, \bar{\nu})} \int c(x_{1:N}, y_{1:N}) \pi(dx_{1:N}, dy_{1:N}) \\
		& = \inf_{\substack{\mu \in \cM(\bar{\mu}), \\ \nu \in \cM(\bar{\nu})}} \;\; \inf_{\pi \in \Pi_{bc}(\mu, \nu)} \int c(x_{1:N}, y_{1:N}) \pi(dx_{1:N}, dy_{1:N}) := \inf_{\substack{\mu \in \cM(\bar{\mu}), \\ \nu \in \cM(\bar{\nu})}} C(\mu, \nu).
	\end{aligned}
\end{equation} 
Here, $\cM(\bar{\mu})$ denotes the set of martingale couplings with $\mu_t$ as marginals for each $t$, and $\cM(\bar{\nu})$ is defined similarly. $\Pi_{bc}(\mu, \nu)$ represents the set of bicausal couplings with $\mu$ and $\nu$ as marginals. 

We illustrate the difficulty of primal attainment and strong duality for bicausal MOT by considering \eqref{eq:two_stage}. For the inner minimization problem, primal attainment and strong duality are obtained by \cite{backhoff2017causal}. The dual attainment is given in \citet[Theorem 4.8]{krvsek2024general}. Primal attainment for the entire problem typically requires the lower semicontinuity of $C(\mu, \nu)$ and the compactness of $\cM(\bar{\mu})$ and $\cM(\bar{\nu})$ under a suitable topology. We leave this general case for future research, noting that techniques from stability results \cite{wiesel2023continuity,backhoff2022stability,eckstein2024computational} may be useful. The outer minimization problem is non-convex, and thus, strong duality fails in general. Consequently, dual attainment becomes less relevant.

In the finite discrete case, Section \ref{sec:bicausalMOT} further explores several properties of bicausal MOT and its connection to quadratically constrained quadratic programming (QCQP). 

\section{Bicausal MOT in the finite discrete case}\label{sec:bicausalMOT} 

With finite discrete marginals, $\pi(x_{1:N}, y_{1:N})$ can be interpreted as probability mass functions. To simplify the notation, we write the marginals of $\pi(x_{1:N}, y_{1:N})$ on $\cX_{1:t}$ as $\pi(x_{1:t})$, without introducing additional subscripts. This convention applies to all other marginals as well. The causality constraint \eqref{eq:causal2} (or \eqref{eq:causal1}) can be equivalently expressed as
\begin{equation}\label{eq:bilin}
	\pi(x_{1:N}, y_t) \pi(x_{1:t}) = \pi(x_{1:t}, y_t)  \pi(x_{1:N}), \quad t = 1, \ldots, N-1.
\end{equation}
We emphasize that \eqref{eq:bilin} holds pointwise, while \eqref{eq:causal1} (or \eqref{eq:causal2}) holds $\pi$-a.s. Indeed, if $\pi(x_{1:N}) > 0$, then $\pi(x_{1:t}) > 0$. Both conditional probabilities in \eqref{eq:causal2} are well-defined. Then \eqref{eq:bilin} holds. If $\pi(x_{1:N}) = 0$, then $\pi(dy_t | x_{1:N})$ is not defined. \eqref{eq:causal2} is not required to hold. However, we note that $\pi(x_{1:N}, y_t) = 0$ since $\pi(x_{1:N}) = 0$. Therefore, \eqref{eq:bilin} is valid as $\pi(x_{1:N}, y_t) \pi(x_{1:t}) = \pi(x_{1:t}, y_t) \pi(x_{1:N}) = 0$.

\eqref{eq:bilin} represents a bilinear equality constraint on $\pi$ and the anticausal constraint \eqref{eq:anticausal} is similar. Without loss of generality, we focus on the minimization problem, as the maximization problem can be handled in the same way. The bicausal MOT in the finite discrete case is given by
\begin{equation}\label{eq:bicausal_discrete}
	\inf_{\pi \in \cM_{bc}(\bar{\mu}, \bar{\nu})} \int c(x_{1:N}, y_{1:N}) \pi(x_{1:N}, y_{1:N}).
\end{equation}
Proposition \ref{prop:bicausal_primal} establishes primal attainment for \eqref{eq:bicausal_discrete} and provides a property of the optimizers. When the supports are discrete and finite, the feasible set $\cM_{bc}(\bar{\mu}, \bar{\nu})$ becomes a subset of a Euclidean space with the usual topology. Let $\partial \cM_{bc}$ denote the boundary of $\cM_{bc}(\bar{\mu}, \bar{\nu})$. Recall that every continuous linear functional on a compact convex set attains its minimum at an extreme point; see \citet[Propositions 2.1.2 and 2.4.1]{bertsekas2009convex} and also Bauer maximum principle. Since $\cM_{bc}(\bar{\mu}, \bar{\nu})$ is non-convex, Proposition \ref{prop:bicausal_primal} only guarantees that an optimizer is on $\partial \cM_{bc}$. Besides, the characterization of the properties of  $\partial \cM_{bc}$ remains an open question.

\begin{proposition}\label{prop:bicausal_primal}
	Suppose the supports $\cX$ and $\cY$ are discrete and finite. Then the bicausal MOT problem \eqref{eq:bicausal_discrete} has an optimizer. Moreover, at least one of the optimizers is on $\partial \cM_{bc}$, the boundary of $\cM_{bc}(\bar{\mu}, \bar{\nu})$ under the Euclidean topology. 
\end{proposition}

In the optimization literature, bicausal MOT is a special case of non-convex QCQP problems \cite{aspremont2003relaxations,vandenberghe1996semidefinite}. Indeed, by flattening the multidimensional array $\pi(x_{1:N}, y_{1:N})$ into a one-dimensional vector $z$ of appropriate length, the objective function in bicausal MOT \eqref{eq:bicausal_discrete} becomes $c^\top z$, where the cost vector $c$ is also flattened. The martingale and marginal constraints are linear equalities for $z$, expressed as $q^\top_i z + r_i = 0$ with suitable vectors $q_i$ and scalars $r_i$. The causal and anticausal constraints can be rewritten as $z^\top P_i z = 0$. Therefore, our problem is a special case of the non-convex QCQP problem discussed in \cite{aspremont2003relaxations,vandenberghe1996semidefinite}, with the standard form given by
\begin{equation}\label{eq:QCQP}
	\begin{aligned}
		\min_{z} \quad & z^\top P_0 z + q^\top_0 z + r_0\\
		\textrm{subject to} \quad & z^\top P_i z + q^\top_i z + r_i \leq 0, \quad i = 1, \ldots, m. \\
	\end{aligned}
\end{equation}

Although quadratic and linear terms do not appear simultaneously in our problem, we adopt the formulation in \eqref{eq:QCQP} from \cite{aspremont2003relaxations,vandenberghe1996semidefinite} for notational simplicity. Since martingale, marginal, and bicausal constraints enumerate all possible scenarios of $x$ and $y$, the number of constraints, denoted by $m$, can be very large. The non-convexity arises because $P_i, i = 1, \ldots, m,$ may not be positive semidefinite in our case. A standard convex relaxation method is known as the SDP relaxation using semidefinite programming. We include it here for the reader's convenience. The problem in \eqref{eq:QCQP} can be rewritten as
\begin{equation}\label{eq:QCQPtr}
	\begin{aligned}
		\min_{z, Z} \quad & \tr[Z P_0] + q^\top_0 z + r_0\\
		\textrm{subject to} \quad & \tr[Z P_i] + q^\top_i z + r_i \leq 0, \quad i = 1, \ldots, m, \\
		& Z = z z^\top. \\
	\end{aligned}
\end{equation}
We relax this non-convex problem into a convex one by replacing the constraint $Z = z z^\top$ with a (convex) positive semidefiniteness constraint $Z - z z^\top \succeq 0$, which is related to the Schur complement. Then the SDP relaxation of \eqref{eq:QCQP} is 
\begin{equation}\label{eq:SDP}
	\begin{aligned}
		\min_{z, Z} \quad & \tr[Z P_0] + q^\top_0 z + r_0\\
		\textrm{subject to} \quad & \tr[Z P_i] + q^\top_i z + r_i \leq 0, \quad i = 1, \ldots, m, \\
		& \begin{bmatrix}
			Z      & z \\
			z^\top & 1 \\
		\end{bmatrix} \succeq 0. \\
	\end{aligned}
\end{equation}

The optimal value of the SDP relaxation \eqref{eq:SDP} provides a lower bound for the optimal value of the non-convex problem in \eqref{eq:QCQP}.

An alternative approach to obtaining a lower bound for the non-convex problem \eqref{eq:QCQP} is to consider its dual, which is always convex. \cite{aspremont2003relaxations} derived the dual problem of \eqref{eq:QCQP} as follows:
\begin{equation}\label{eq:dual_QCQP}
	\begin{aligned}
		\max_{\lambda, \gamma} \quad & \gamma + \sum^m_{i=1} \lambda_i r_i + r_0\\
		\textrm{subject to} \quad & \begin{bmatrix}
			(P_0 + \sum^m_{i=1} \lambda_i P_i)  &  (q_0 + \sum^m_{i=1} \lambda_i q_i)/2 \\
			(q_0 + \sum^m_{i=1} \lambda_i q_i)^\top/2 & - \gamma \\
		\end{bmatrix} \succeq 0, \\
		& \lambda_i \geq 0, \quad i = 1, \ldots, m. \\
	\end{aligned}
\end{equation}
The problem in \eqref{eq:dual_QCQP} is a semidefinite program, known as the Lagrangian relaxation of \eqref{eq:QCQP}. Notably, the SDP relaxation \eqref{eq:SDP} and the Lagrangian relaxation \eqref{eq:dual_QCQP} are duals of each other and have the same optimal value if a constraint qualification holds \citet[Section 2.2]{aspremont2003relaxations}. Constraint qualifications are conditions under which the strong duality holds; see \citet[Section 5.2.3]{boyd2004convex} and Slater's condition for an example.

In general, strong duality between the non-convex QCQP \eqref{eq:QCQP} and the Lagrangian relaxation \eqref{eq:dual_QCQP} does not hold. However, \citet[Section 2.3]{aspremont2003relaxations} identifies a special case with a single constraint, where the duality gap is zero.

We summarize the standard results above in the following proposition and refer interested readers to \cite{aspremont2003relaxations,vandenberghe1996semidefinite} for more details. 

\begin{proposition}[\cite{aspremont2003relaxations,vandenberghe1996semidefinite}]
	When the supports $\cX$ and $\cY$ are discrete and finite, the bicausal MOT problem \eqref{eq:bicausal_discrete} is a special case of the non-convex QCQP \eqref{eq:QCQP}. A lower bound on the optimal value of \eqref{eq:QCQP} can be obtained using the SDP relaxation \eqref{eq:SDP} and the Lagrangian relaxation \eqref{eq:dual_QCQP}. These two relaxations are duals of each other and have the same optimal value if a constraint qualification holds.
\end{proposition}

As a non-convex QCQP, it is computationally challenging to find exact upper/lower bounds in the bicausal MOT problems \eqref{eq:bi_MOT}. In addition to SDP relaxations \citep{anstreicher2009semidefinite,sherali2002enhancing,aspremont2003relaxations,vandenberghe1996semidefinite}, other algorithms include branch-and-cut methods with reformulation-linearization techniques \citep{audet2000branch} and McCormick relaxations \citep{mccormick1976}. In this study, we choose the McCormick relaxation for several reasons. First, the bicausal MOT is motivated by option pricing, and the dual problem corresponds to the hedging problem, which is important in its own right. However, obtaining the dual formulation for the method in \cite{audet2000branch} is quite complex. In contrast, McCormick relaxations allow us to obtain the duality, even though the dual problem \eqref{prob:dual} remains lengthy. Second, compared to SDP relaxations, the McCormick MOT \eqref{eq:discrete} is linear, whereas the SDP relaxation \eqref{eq:SDP} remains nonlinear. Although the semidefinite programming is not much harder than LP to solve \citep{vandenberghe1996semidefinite}, the LP problem is still a bit more computable.

In practice, combining multiple optimization techniques may yield better bounds. However, in this paper, we focus on demonstrating the improvements achieved by a single technique. Additionally, McCormick relaxation is commonly used as a built-in routine in optimization software and is straightforward to implement manually.

\section{McCormick relaxation}\label{sec:McCormick}
\subsection{The finite discrete case}\label{sec:discrete}

McCormick relaxations construct the convex and concave envelopes as follows. For the probability $\pi(x_{1:N}, y_t)$, we suppose the upper bound $U_{N, t}(x_{1:N}, y_t)$ and lower bound $L_{N, t}(x_{1:N}, y_t)$ are known. The corresponding upper and lower bounds for $\pi(x_{1:t})$ are denoted as $U_{t, 0}(x_{1:t})$ and $L_{t, 0}(x_{1:t})$, respectively. Here, the subscripts in $L_{t, s}$ and $U_{t, s}$, $0 \leq t, s \leq N$, indicate that different functions are used when the variables $(x, y)$ have different indices. For simplicity, we denote $L_{t, s} =: L$ and $U_{t, s} =: U$, since the indices in $(x, y)$ imply the functions used. Then we can write  
\begin{align*}
	L(x_{1:N}, y_t) &\leq \pi(x_{1:N}, y_t) \leq U(x_{1:N}, y_t) \quad \text{ and }  \quad L(x_{1:t}) \leq \pi(x_{1:t}) \leq U(x_{1:t}).
\end{align*}
Since $(\pi(x_{1:N}, y_t) - L(x_{1:N}, y_t))(\pi(x_{1:t}) - L(x_{1:t})) \geq 0$, expanding the expression and substituting $w_1(x_{1:N}, y_t) = \pi(x_{1:N}, y_t) \pi(x_{1:t})$ yields
\begin{equation}\label{eq:prove_Mc}
	w_1(x_{1:N}, y_t) \geq L(x_{1:N}, y_t) \pi(x_{1:t}) + \pi(x_{1:N}, y_t) L(x_{1:t}) - L(x_{1:N}, y_t) L(x_{1:t}).
\end{equation}
Similarly, the same procedure applied to products involving $U(x_{1:N}, y_t) - \pi(x_{1:N}, y_t)$ and $U(x_{1:t}) - \pi(x_{1:t})$ yields three additional inequalities. Similarly, we also introduce $w_2(x_{1:N}, y_t) = \pi(x_{1:t}, y_t) \pi(x_{1:N})$, and four corresponding inequalities are derived. The causality constraint  \eqref{eq:bilin} becomes $w_1(x_{1:N}, y_t) = w_2(x_{1:N}, y_t)$. 

Together with the relaxation on anticausality and the martingale condition, we introduce McCormick martingale couplings as follows:
\begin{definition}\label{def:Mc}
	Given the upper bound $U(\cdot)$ and lower bound $L(\cdot)$ on probability masses, $\pi \in \Pi(\bar{\mu}, \bar{\nu})$ is called a McCormick martingale coupling if it satisfies
	\begin{enumerate}[label={(\arabic*)}]
		\item the capacity constraint: $L \leq \pi \leq U$;
		\item the martingale condition: 
		\begin{equation}
			\sum_{x_{t+1}} x_{t+1} \pi(x_{t+1} | x_{1:t}, y_{1:t}) = x_t, \quad \sum_{y_{t+1}} y_{t+1} \pi(y_{t+1} | x_{1:t}, y_{1:t}) = y_t, \quad 1 \leq t \leq N-1;
		\end{equation}
		\item the McCormick relaxation of the causality condition: for all $1 \leq t \leq N - 1$ and $x_{1:N}, y_t$, there exist variables $w_1(\cdot), w_2(\cdot) \in [0, 1]$ satisfying
		\begin{equation}\label{eq:Mc_cau}
			\begin{aligned}
				w_1(x_{1:N}, y_t) & \geq L(x_{1:N}, y_t) \pi(x_{1:t}) + \pi(x_{1:N}, y_t) L(x_{1:t}) - L(x_{1:N}, y_t) L(x_{1:t}), \\
				w_1(x_{1:N}, y_t) & \geq U(x_{1:N}, y_t) \pi(x_{1:t}) + \pi(x_{1:N}, y_t) U(x_{1:t}) - U(x_{1:N}, y_t) U(x_{1:t}),  \\
				w_1(x_{1:N}, y_t) & \leq U(x_{1:N}, y_t) \pi(x_{1:t}) + \pi(x_{1:N}, y_t) L(x_{1:t}) - U(x_{1:N}, y_t) L(x_{1:t}), \\
				w_1(x_{1:N}, y_t) & \leq \pi(x_{1:N}, y_t) U(x_{1:t}) + L(x_{1:N}, y_t) \pi(x_{1:t}) - L(x_{1:N}, y_t) U(x_{1:t}), \\
				w_2(x_{1:N}, y_t) & \geq L(x_{1:t}, y_t) \pi(x_{1:N}) + \pi(x_{1:t}, y_t) L(x_{1:N}) - L(x_{1:t}, y_t) L(x_{1:N}), \\
				w_2(x_{1:N}, y_t) & \geq U(x_{1:t}, y_t) \pi(x_{1:N}) + \pi(x_{1:t}, y_t) U(x_{1:N}) - U(x_{1:t}, y_t) U(x_{1:N}),  \\
				w_2(x_{1:N}, y_t) & \leq U(x_{1:t}, y_t) \pi(x_{1:N}) + \pi(x_{1:t}, y_t) L(x_{1:N}) - U(x_{1:t}, y_t) L(x_{1:N}), \\
				w_2(x_{1:N}, y_t) & \leq \pi(x_{1:t}, y_t) U(x_{1:N}) + L(x_{1:t}, y_t) \pi(x_{1:N}) - L(x_{1:t}, y_t) U(x_{1:N}), \\
				w_1(x_{1:N}, y_t)  & = w_2(x_{1:N}, y_t);
			\end{aligned}
		\end{equation}
		\item  the McCormick relaxation of the anticausality condition, by interchanging $x$ and $y$ in \eqref{eq:Mc_cau}.
	\end{enumerate}
	We denote the set of all McCormick martingale couplings as $\cM(\bar{\mu}, \bar{\nu}; L, U)$.
\end{definition}
All of the inequalities in \eqref{eq:Mc_cau} are derived in the same way as \eqref{eq:prove_Mc}. Note that \eqref{eq:Mc_cau} holds for all $x_{1:N}, y_t$, instead of merely $\pi$-a.s. In condition (3), it is possible to simplify $w_1(x_{1:N}, y_t) = w_2(x_{1:N}, y_t) =: w(x_{1:N}, y_t)$. The term $w(\cdot)$ (or $w_1(\cdot)$ and $w_2(\cdot)$) also serves as a variable in the optimization process. However, in \eqref{eq:Mc_cau}, we can eliminate the dependence on $w(\cdot)$ by imposing that the maximum of the lower bounds does not exceed the minimum of the upper bounds for $w(\cdot)$. Therefore, we only write $\pi \in \cM(\bar{\mu}, \bar{\nu}; L, U)$ to signify that $\pi$ is a McCormick martingale coupling.

A simple choice of bounds is $L = 0$ and $U(x_{1:N}, y_t) = \min\{\mu_1(x_1), \ldots, \mu_N(x_N), \nu_t(y_t)\}$, which are derived from the marginal conditions and do not require additional modeling assumptions. Under this choice, $\cM(\bar{\mu}, \bar{\nu}; L, U)$ is also not empty. Indeed, we can first construct (or rather arbitrarily pick) the marginal laws, $\mu\in\Pi(\overline\mu)$ and $\nu \in\Pi(\overline\nu)$, such that $X_{1:N}$ and $Y_{1:N}$ are martingales under $\mu$ and $\nu$, respectively. Then the independent coupling $\mu \otimes \nu$ serves as a McCormick martingale coupling. In the general finite and discrete cases, many LP solvers can check whether $\cM(\bar{\mu}, \bar{\nu}; L, U)$ is non-empty.  

\begin{remark}
	In Definition \ref{def:Mc}, the martingale condition is imposed under joint filtration, as the bicausal condition is relaxed and Proposition \ref{prop:joint_indiv} is not applicable in this context.
\end{remark}
\begin{remark}
	The total mass of $L$ or $U$ may not be equal to one. Both $L$ and $U$ are functions only, rather than probability measures or signed measures. Even when $U$ is a measure, the product $\pi (\cdot) U(\cdot)$ is not a measure. Specifically, for a set $A = B \cup C$ where $B$ and $C$ are disjoint, $\pi(A) U(A) = [\pi(B) + \pi(C)][U(B) + U(C)] \neq \pi(B)U(B) + \pi(C)U(C)$. Furthermore, it is crucial to interpret \eqref{eq:Mc_cau} pointwise in the discrete case, given that the McCormick relaxation is derived pointwise in \eqref{eq:prove_Mc}.
\end{remark}

We focus on addressing the minimization problem associated with the McCormick MOT in \eqref{eq:discrete}, as the maximization problem can be treated similarly:
\begin{equation}\label{eq:discrete}
	\inf_{\pi \in \cM(\bar{\mu}, \bar{\nu}; L, U)} \int c(x_{1:N}, y_{1:N}) \pi(dx_{1:N}, dy_{1:N}).
\end{equation} 
In finite and discrete scenarios, the infimum in the primal problem is attained when the set $\cM(\bar{\mu}, \bar{\nu}; L, U)$ is not empty. The dual problem is derived through classic finite-dimensional LP theory and is omitted here. The dual problem also has an optimizer when either the primal or dual optimal value is finite \cite[Proposition 5.2.1]{bertsekas2009convex}.

\subsection{The absolutely continuous case with capacity constraints}\label{sec:capacity}
In the continuous scenario similar to \cite{korman2015optimal, bogachev2022}, we make the assumption that the domains $\cX_t = \cY_t = \R$ and the marginals $\mu_t$ and $\nu_t$ are absolutely continuous with respect to the Lebesgue measure on $\R$. Similarly, we consider couplings $\pi$ that are absolutely continuous with respect to the Lebesgue measure $\lambda$ on $\R^N \times \R^N$. When the context is clear, we denote $\lambda(dx_{1:N}, dy_{1:N})$ by $d\lambda$ and omit the arguments. Let $f(x_{1:N}, y_{1:N})$ represent the density of a coupling $\pi$. Since $\pi$ is assumed to be absolutely continuous with respect to the Lebesgue measure $\lambda$, $\pi$ is a measure on the Lebesgue measurable sets. However, for probability measures, we typically work with the Borel measurable sets. Hence, we restrict to the Borel $\sigma$-algebra on the Euclidean space and interpret $\pi$ as a Borel probability measure. Additionally, nonnegative stock prices result in zero densities in negative domains.

Similarly as before, we denote the marginals of $f(x_{1:N}, y_{1:N})$ on $\cX_{1:t}$ as $f(x_{1:t})$, without adding extra subscripts to $f$. This convention applies to all other marginals as well. Due to absolute continuity, the causality constraint \eqref{eq:causal1} (or \eqref{eq:causal2}) transforms into the following density equality:
\begin{equation}\label{eq:conti}
	f(x_{1:N}, y_t) f(x_{1:t}) = f(x_{1:t}, y_t) f(x_{1:N}), \quad \text{$\lambda$-a.e.}, \quad t = 1, \ldots, N-1.
\end{equation}
Crucially, \eqref{eq:conti} holds $\lambda$-a.e., instead of merely $\pi$-a.s. It can be verified in the same spirit of \eqref{eq:bilin}, by noting that both sides of \eqref{eq:conti} are zero when regular conditional kernels are not defined. 

With a slight abuse of notation, let $w_1(x_{1:N}, y_t) = f(x_{1:N}, y_t) f(x_{1:t})$ still be the auxiliary variable. With the same convention on omitting the subscripts as in the discrete case, we suppose that the upper bound $u(\cdot)$ and lower bound $l(\cdot)$ for the density $f(\cdot)$ are known, such that 
\begin{align*}
	l(x_{1:N}, y_t) &\leq f(x_{1:N}, y_t) \leq u(x_{1:N}, y_t), \; \text{$\lambda$-a.e.} \quad \text{ and }  \quad l(x_{1:t}) \leq f(x_{1:t}) \leq u(x_{1:t}), \; \text{$\lambda$-a.e.},
\end{align*}
which are referred to as capacity/density constraints \cite{korman2015optimal,bogachev2022}. Similarly, the inequality \eqref{eq:prove_Mc} becomes
\begin{equation}\label{eq:prove_capa}
	w_1(x_{1:N}, y_t) \geq l(x_{1:N}, y_t) f(x_{1:t}) + f(x_{1:N}, y_t) l(x_{1:t}) - l(x_{1:N}, y_t) l(x_{1:t}), \quad \text{$\lambda$-a.e.}
\end{equation}
Other inequalities can be derived in a similar fashion. It is important to emphasize that $l(\cdot)$ and $u(\cdot)$ are not necessarily density functions.

Denote $g_t$ as the density of $\mu_t$, and $\bar{g} := (g_1, \ldots, g_N)$ as the individual density vectors. Similarly, let $h_t$ be the density of $\nu_t$, and $\bar{h} := (h_1, \ldots, h_N)$ be the individual density vectors. We define the continuous version of McCormick martingale couplings similarly. Note that \eqref{eq:Mc_capa} holds $\lambda$-a.e.
\begin{definition}\label{def:Mc_capa}
	Given the upper bound $u(\cdot)$ and lower bound $l(\cdot)$ on densities, a coupling $\pi$ with density $f \in \Pi(\bar{g}, \bar{h})$ is called a McCormick martingale coupling if it satisfies
	\begin{enumerate}[label={(\arabic*)}]
		\item the capacity constraint: $l \leq f \leq u$, $\lambda$-a.e.;
		\item the martingale condition: 
		\begin{equation}
			\begin{aligned}
				\int_{\cX_{t+1}} x_{t+1} f(x_{t+1} | x_{1:t}, y_{1:t}) \lambda(dx_{t+1}) &= x_t, \; 1 \leq t \leq N-1, \\ 
				\int_{\cY_{t+1}} y_{t+1} f(y_{t+1} | x_{1:t}, y_{1:t})\lambda(dy_{t+1}) &= y_t, \; 1 \leq t \leq N-1;
			\end{aligned}
		\end{equation}
		\item the McCormick relaxation of the causality condition on densities: for all $1 \leq t \leq N-1$, there exist $w_1(x_{1:N}, y_t), w_2(x_{1:N}, y_t) \geq 0$ such that the following inequality holds $\lambda$-almost everywhere:
		\begin{equation}\label{eq:Mc_capa}
			\begin{aligned}
				w_1(x_{1:N}, y_t) & \geq l(x_{1:N}, y_t) f(x_{1:t}) + f(x_{1:N}, y_t) l(x_{1:t}) - l(x_{1:N}, y_t) l(x_{1:t}), \\
				w_1(x_{1:N}, y_t) & \geq u(x_{1:N}, y_t) f(x_{1:t}) + f(x_{1:N}, y_t) u(x_{1:t}) - u(x_{1:N}, y_t) u(x_{1:t}),  \\
				w_1(x_{1:N}, y_t) & \leq u(x_{1:N}, y_t) f(x_{1:t}) + f(x_{1:N}, y_t) l(x_{1:t}) - u(x_{1:N}, y_t) l(x_{1:t}), \\
				w_1(x_{1:N}, y_t) & \leq f(x_{1:N}, y_t) u(x_{1:t}) + l(x_{1:N}, y_t) f(x_{1:t}) - l(x_{1:N}, y_t) u(x_{1:t}), \\
				w_2(x_{1:N}, y_t) & \geq l(x_{1:t}, y_t) f(x_{1:N}) + f(x_{1:t}, y_t) l(x_{1:N}) - l(x_{1:t}, y_t) l(x_{1:N}), \\
				w_2(x_{1:N}, y_t) & \geq u(x_{1:t}, y_t) f(x_{1:N}) + f(x_{1:t}, y_t) u(x_{1:N}) - u(x_{1:t}, y_t) u(x_{1:N}),  \\
				w_2(x_{1:N}, y_t) & \leq u(x_{1:t}, y_t) f(x_{1:N}) + f(x_{1:t}, y_t) l(x_{1:N}) - u(x_{1:t}, y_t) l(x_{1:N}), \\
				w_2(x_{1:N}, y_t) & \leq f(x_{1:t}, y_t) u(x_{1:N}) + l(x_{1:t}, y_t) f(x_{1:N}) - l(x_{1:t}, y_t) u(x_{1:N}), \\
				w_1(x_{1:N}, y_t) &= w_2(x_{1:N}, y_t);
			\end{aligned}
		\end{equation}
		\item  the McCormick relaxation of the anticausality condition on densities, by interchanging $x$ and $y$ in \eqref{eq:Mc_capa}.
	\end{enumerate}
	We use $f \in \cM(\bar{g}, \bar{h}; l, u)$ to denote that $f$ corresponds to a McCormick coupling with capacity constraints. 
\end{definition}

With continuous densities, we denote the risk-neutral price of the exotic option as
\begin{equation}\label{eq:pf}
	P(f) := \int c(x_{1:N}, y_{1:N}) f(x_{1:N}, y_{1:N}) d \lambda.
\end{equation}
The primal problem, incorporating McCormick relaxation and capacity constraints, can be expressed as
\begin{equation}\label{prob:capacity}
	P = \inf_{f \in \cM(\bar{g}, \bar{h}; l, u)} P(f).
\end{equation} 
This formulation is referred to as the continuous version of McCormick MOT.

Theorem \ref{thm:primal} demonstrates the primal attainment of the problem \eqref{prob:capacity}. The key step involves showing that $\cM(\bar{g}, \bar{h}; l, u)$ is compact under the weak topology of $L^1(\R^N \times \R^N)$. A natural question arises: does the primal attainment hold for the bicausal MOT in the absolutely continuous case? To establish that the corresponding feasible set is closed, density equalities such as \eqref{eq:conti} for causal or anticausal constraints require $\lambda$-a.e. convergence, at least for some subsequences. This issue also persists if we use the equivalent condition \eqref{prop:equi} instead. Since the weak topology on $L^p$ does not guarantee $\lambda$-a.e. convergence, proving primal attainment for the bicausal MOT in the absolutely continuous case remains an open problem.
\begin{theorem}\label{thm:primal}
	Suppose the following conditions hold:
	\begin{enumerate}[label={\arabic*.}]
		\item $\cM(\bar{g}, \bar{h}; l, u)$ is not empty;
		\item Probability densities $g_t, h_t \in L^1(\R)$ and have finite first moments. The bounds $l, u \in L^1(\R^N \times \R^N) \cap L^\infty(\R^N \times \R^N)$ and $l, u \geq 0$; 
		\item The cost (payoff) function $ c: \R^N \times \R^N \rightarrow (-\infty, \infty]$ is measurable and $c \geq - C(1 + \sum^N_{i=1}|x_i| + \sum^N_{j=1}|y_j|)$ for some constant $C \geq 0$.
	\end{enumerate}
	Then the infimum in \eqref{prob:capacity} is attained.
\end{theorem}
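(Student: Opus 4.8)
The plan is to establish primal attainment by the direct method of the calculus of variations: take a minimizing sequence $(f_n)$ in $\cM(\bar g,\bar h;l,u)$, extract a subsequence converging in a suitable topology to some limit $f_\infty$, show $f_\infty$ is still feasible (i.e.\ $f_\infty\in\cM(\bar g,\bar h;l,u)$), and show lower semicontinuity of $P(\cdot)$ along the sequence so that $P(f_\infty)\le\liminf_n P(f_n)=P$. The key is choosing the topology correctly. Because of the capacity constraint $l\le f\le u$ with $u\in L^1$, the densities $f_n$ are uniformly dominated by $u$, hence the family $\{f_n\}$ is uniformly integrable; by the Dunford--Pettis theorem, after passing to a subsequence, $f_n\rightharpoonup f_\infty$ weakly in $L^1(\R^N\times\R^N)$. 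I would work with this $L^1$-weak topology (equivalently, testing against $L^\infty$ functions), which is exactly the ``suitable weak topology'' the introduction alludes to.

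First I would record the easy closure properties. Weak $L^1$-convergence preserves the pointwise-a.e.\ inequality constraints $l\le f_\infty\le u$ $d\lambda$-a.s.\ (the set $\{g: l\le g\le u\}$ is convex and strongly closed in $L^1$, hence weakly closed), and it preserves the marginal constraints $f_\infty\in\Pi(\bar g,\bar h)$ since integrating $f_n$ against a test function of $x_t$ only (resp.\ $y_t$ only) is a continuous linear functional. Next, for the McCormick causality constraints \eqref{eq:Mc_capa}: these are linear inequalities relating the \emph{marginal densities} $f(x_{1:N},y_t)$, $f(x_{1:t})$, $f(x_{1:t},y_t)$, $f(x_{1:N})$ (all obtained from $f$ by integrating out variables, a weakly continuous operation on $L^1$) together with auxiliary functions $w_1,w_2\ge0$. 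Given the sequence $f_n$ with associated $w_{1,n},w_{2,n}$, I note that the McCormick inequalities force $w_{1,n}=w_{2,n}$ to lie between explicit affine combinations of the ($u$-dominated) marginal densities, so $(w_{i,n})$ is also uniformly integrable; extract a further weakly convergent subsequence $w_{i,n}\rightharpoonup w_{i,\infty}$. Passing to the weak limit in each affine inequality of \eqref{eq:Mc_capa} (and its anticausal counterpart) — each such inequality, being of the form ``$w_1-(\text{affine in marginals of }f)\lessgtr0$ $d\lambda$-a.s.'', defines a convex strongly-closed, hence weakly closed, subset of $L^1$ — shows $f_\infty$ together with $w_{1,\infty},w_{2,\infty}$ satisfies condition (3) and (4). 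The martingale condition is linear in $f$ once written in the integrated (unconditional) form $\int_{A} x_{t+1}\,f\,d\lambda=\int_A x_t\,f\,d\lambda$ for all bounded measurable $A$ in the remaining coordinates; weak $L^1$-convergence against the bounded test function $x_{t+1}\mathbf 1_A$ would be immediate \emph{were $x_{t+1}$ bounded}, so this needs the uniform first-moment control discussed below. Finally, lower semicontinuity of $P$: write $c=c^+-c^-$ with $0\le c^-\le C(1+\sum|x_i|+\sum|y_j|)$; the linear growth bound plus uniform integrability and the finite-first-moment marginals give uniform integrability of $c^-f_n$, so $\int c^- f_n\to\int c^- f_\infty$, while $\int c^+ f_n$ is handled by weak lower semicontinuity of the convex (in fact linear) nonnegative functional via, e.g., Fatou after a Mazur-lemma / Komlós argument or truncation $c^+\wedge M$ and monotone convergence; combining yields $\liminf_n P(f_n)\ge P(f_\infty)$.

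The main obstacle is the interaction between the \emph{unboundedness} of the martingale test functions $x\mapsto x$ and mere weak $L^1$-convergence: weak $L^1$-convergence only tests against $L^\infty$, so neither the martingale identities (which pair $f$ with $x_{t+1}$) nor the lower bound on the cost are automatically stable under the limit. The remedy I would carry out is a uniform tightness-of-first-moments estimate: show $\sup_n\int (|x_i|+|y_j|)f_n\,d\lambda<\infty$ — this follows because the marginals of every $f_n$ are the \emph{fixed} densities $g_i,h_j$, which have finite first moments by hypothesis 2 — and then upgrade uniform integrability of $f_n$ to uniform integrability of $(1+\sum|x_i|+\sum|y_j|)f_n$. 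That upgrade is the delicate point: I would use that $\{f_n\}$ is dominated by the fixed $L^1$ function $u$ together with the fixed-marginal moment bound to get $\int_{\{|x|>R\}}|x_i|f_n\,d\lambda\le\int_{\{|x_i|>R\}}|x_i|g_i(x_i)\,dx_i\to0$ uniformly in $n$, and similarly localizing on the set $\{f_n>M\}$ via domination by $u$. With this reinforced uniform integrability, the martingale constraint passes to the limit (test against $x_{t+1}\mathbf 1_A$, truncating and using the uniform moment tails), and the $\int c^- f$ term converges, completing the argument. The remaining pieces — Dunford--Pettis extraction, weak closedness of convex sets, continuity of marginalization — are routine functional analysis and I would present them briefly.
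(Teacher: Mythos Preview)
Your proposal is correct and follows essentially the same route as the paper: weak $L^1$-topology, uniform integrability via the domination $f\le u\in L^1$, weak closedness of the feasible set (with the martingale constraint handled by tail control coming from the \emph{fixed} marginals $g_t,h_t$), and lower semicontinuity of $P$. Two minor remarks: the paper streamlines your $c^{\pm}$ split by first replacing $c$ with $c+C(1+\sum|x_i|+\sum|y_j|)\ge 0$, which shifts $P(f)$ by a constant independent of $f$ because the linear term is determined by the fixed marginals; and you are actually more careful than the paper in explicitly extracting weak limits of the auxiliary $w_{i,n}$ to pass the McCormick inequalities to the limit, a step the paper's proof leaves implicit.
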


\subsubsection{The strong duality}\label{sec:dual}
It is well known that the dual formulation of the classic MOT corresponds to a semi-static hedging strategy, comprising the sum of static vanilla portfolios and dynamic delta positions \citep{beiglbock2013model}. A natural question arises regarding the impact of the McCormick relaxation on the hedging portfolio.  

In an informal manner, one can derive the dual problem by considering the Lagrange multipliers method and interchanging the infimum with the supremum. We introduce the multipliers as follows, having reduced $w_1(x_{1:N}, y_t) = w_2(x_{1:N}, y_t) = w(x_{1:N}, y_t)$: 
\begin{enumerate}[label={(\arabic*)}]
	\item $\phi_t$ and $\varphi_t$, $1 \leq t \leq N$ denote potential functions testing the marginal constraints for $\mu_t$ and $\nu_t$, respectively. In financial terms, they can be interpreted as a portfolio of static vanilla options;
	\item $\alpha_t$ and $\beta_t$, $1 \leq t \leq N-1$ test the martingale condition for $X$ and $Y$, respectively. They represent self-financing trading strategies in the risky assets;
	\item $\gamma_{t, i}$, $1 \leq t \leq N-1$, $1 \leq i \leq 8$, are multipliers for the McCormick relaxation of the causality condition; 
	\item $\eta_{t, i}$, $1 \leq t \leq N-1$, $1 \leq i \leq 8$ are multipliers for the McCormick relaxation of the anticausality condition; 
	\item $\kappa$ and $\theta$ are the multipliers for the upper and lower bounds, respectively.
\end{enumerate}
Appendix \ref{appen:derive_dual} provides an outline for deriving the dual problem. Constraints in the dual problem consist of the following components:
\begin{enumerate}[label={(\arabic*)}, leftmargin=*]
	\item The subhedging condition in the sense that
	\begin{equation}\label{simple_f_dual}
		\begin{aligned}
			& c(x_{1:N}, y_{1:N}) - \sum^N_{t=1} \phi_t(x_t) - \sum^N_{t=1}  \varphi_t(y_t) + \kappa(x_{1:N}, y_{1:N}) - \theta(x_{1:N}, y_{1:N})\\
			& + \sum^{N-1}_{t=1} \alpha_t(x_{1:t}, y_{1:t}) (x_{t+1} - x_t) + \sum^{N-1}_{t=1} \beta_t(x_{1:t}, y_{1:t}) (y_{t+1} - y_t) \\
			& + \Psi(x_{1:N}, y_{1:N}, \gamma, \eta; u, l) \geq 0,
		\end{aligned}
	\end{equation}
	where $\Psi(x_{1:N}, y_{1:N}, \gamma, \eta; u, l)$ is defined as the following sum:
	\begin{equation}\label{dual_f}
		\hspace{-\leftmargin}
		\begin{aligned}
			&\sum^{N-1}_{t=1} \gamma_{t, 1} (x_{1:N}, y_t) \big(l(x_{1:N}, y_t) + l(x_{1:t})\big) + \sum^{N-1}_{t=1} \gamma_{t, 2} (x_{1:N}, y_t) \big( u(x_{1:N}, y_t) + u(x_{1:t}) \big) \\
			& - \sum^{N-1}_{t=1} \gamma_{t, 3} (x_{1:N}, y_t) \big(u(x_{1:N}, y_t) + l(x_{1:t}) \big) - \sum^{N-1}_{t=1} \gamma_{t, 4} (x_{1:N}, y_t) \big( l(x_{1:N}, y_t)  + u(x_{1:t}) \big)  \\
			& + \sum^{N-1}_{t=1} \gamma_{t, 5} (x_{1:N}, y_t) \big( l(x_{1:t}, y_t) + l(x_{1:N}) \big) + \sum^{N-1}_{t=1} \gamma_{t, 6} (x_{1:N}, y_t) \big( u(x_{1:t}, y_t) + u(x_{1:N}) \big) \\
			& - \sum^{N-1}_{t=1} \gamma_{t, 7} (x_{1:N}, y_t) \big( u(x_{1:t}, y_t) + l(x_{1:N}) \big) - \sum^{N-1}_{t=1} \gamma_{t, 8} (x_{1:N}, y_t) \big( l(x_{1:t}, y_t) + u(x_{1:N}) \big) \\
			& + \sum^{N-1}_{t=1} \eta_{t, 1} (x_t, y_{1:N}) \big(l(x_t, y_{1:N}) + l(y_{1:t})\big) + \sum^{N-1}_{t=1} \eta_{t, 2} (x_t, y_{1:N}) \big( u(x_t, y_{1:N}) + u(y_{1:t}) \big) \\
			& - \sum^{N-1}_{t=1} \eta_{t, 3} (x_t, y_{1:N}) \big(u(x_t, y_{1:N}) + l(y_{1:t}) \big) - \sum^{N-1}_{t=1} \eta_{t, 4} (x_t, y_{1:N}) \big( l(x_t, y_{1:N})  + u(y_{1:t}) \big)  \\
			& + \sum^{N-1}_{t=1} \eta_{t, 5} (x_t, y_{1:N}) \big( l(x_t, y_{1:t}) + l(y_{1:N}) \big) + \sum^{N-1}_{t=1} \eta_{t, 6} (x_t, y_{1:N}) \big( u(x_t, y_{1:t}) + u(y_{1:N}) \big) \\
			& - \sum^{N-1}_{t=1} \eta_{t, 7} (x_t, y_{1:N}) \big( u(x_t, y_{1:t}) + l(y_{1:N}) \big) - \sum^{N-1}_{t=1} \eta_{t, 8} (x_t, y_{1:N}) \big( l(x_t, y_{1:t}) + u(y_{1:N}) \big);
		\end{aligned}
	\end{equation}
	\item The dual inequality for the McCormick relaxation of the causality condition:
	\begin{equation}\label{dual_Mc}
		\begin{aligned}
			& - \gamma_{t, 1}(x_{1:N}, y_t) - \gamma_{t, 2} (x_{1:N}, y_t) + \gamma_{t, 3}(x_{1:N}, y_t) + \gamma_{t, 4} (x_{1:N}, y_t) - \gamma_{t, 5}(x_{1:N}, y_t) \\
			&  - \gamma_{t, 6} (x_{1:N}, y_t) 	+ \gamma_{t, 7}(x_{1:N}, y_t) + \gamma_{t, 8} (x_{1:N}, y_t) \geq 0, \quad t = 1, \ldots, N-1.
		\end{aligned}
	\end{equation}
	
	\item The dual inequality for the McCormick relaxation of the anticausality condition:
	\begin{equation}\label{dual_AntiMc}
		\begin{aligned}
			& - \eta_{t, 1}(x_t, y_{1:N}) - \eta_{t, 2} (x_t, y_{1:N}) + \eta_{t, 3}(x_t, y_{1:N}) + \eta_{t, 4} (x_t, y_{1:N}) - \eta_{t, 5}(x_t, y_{1:N}) \\
			&  - \eta_{t, 6} (x_t, y_{1:N}) + \eta_{t, 7}(x_t, y_{1:N}) + \eta_{t, 8} (x_t, y_{1:N}) \geq 0, \quad t = 1, \ldots, N-1.
		\end{aligned}
	\end{equation}
\end{enumerate}

Define the set of feasible multipliers as
\begin{align}
	\Phi_c := \Big\{ \Big(\phi, \varphi, \alpha, \beta, \gamma, \eta, \kappa, \theta\Big)  \Big| & \phi_t, \varphi_t, \theta,  \kappa, \alpha_t, \beta_t, \gamma_{t, i},  \eta_{t, i}  \text{ are continuous and bounded}; \nonumber \\
	& \gamma_{t, i} \geq 0,  \eta_{t, i} \geq 0, \kappa \geq 0, \theta \geq 0; \nonumber \\
	& \text{\eqref{simple_f_dual}, \eqref{dual_Mc}, \eqref{dual_AntiMc} hold} \Big\}.
\end{align}

The objective function in the dual problem is
\begin{equation}\label{eq:dual_obj}
	\hspace{-\leftmargin}
	\begin{aligned}
		& D(\phi, \varphi, \alpha, \beta, \gamma, \eta, \kappa, \theta)\\
		& := \sum^N_{t=1} \int \phi_t(x_t) g_t(x_t) \lambda(dx_t) + \sum^N_{t=1} \int \varphi_t(y_t) h_t(y_t) \lambda(dy_t) \\
		& \quad - \int \kappa(x_{1:N}, y_{1:N}) u(x_{1:N}, y_{1:N}) d\lambda + \int \theta(x_{1:N}, y_{1:N}) l(x_{1:N}, y_{1:N}) d\lambda \\
		& \quad - \sum^{N-1}_{t=1} \int \Big[ \gamma_{t, 1} (x_{1:N}, y_t) l(x_{1:N}, y_t) l(x_{1:t}) + \gamma_{t, 2} (x_{1:N}, y_t) u(x_{1:N}, y_t) u(x_{1:t}) \Big] \lambda(dx_{1:N}, dy_t) \\
		& \quad + \sum^{N-1}_{t=1} \int \Big[ \gamma_{t, 3} (x_{1:N}, y_t) u(x_{1:N}, y_t) l(x_{1:t}) + \gamma_{t, 4} (x_{1:N}, y_t) l(x_{1:N}, y_t) u(x_{1:t}) \Big] \lambda(dx_{1:N}, dy_t) \\
		& \quad - \sum^{N-1}_{t=1} \int \Big[ \gamma_{t, 5} (x_{1:N}, y_t) l(x_{1:t}, y_t) l(x_{1:N}) + \gamma_{t, 6} (x_{1:N}, y_t) u(x_{1:t}, y_t) u(x_{1:N}) \Big] \lambda(dx_{1:N}, dy_t) \\
		& \quad + \sum^{N-1}_{t=1} \int \Big[ \gamma_{t, 7} (x_{1:N}, y_t) u(x_{1:t}, y_t) l(x_{1:N}) + \gamma_{t, 8} (x_{1:N}, y_t) l(x_{1:t}, y_t) u(x_{1:N}) \Big] \lambda(dx_{1:N}, dy_t) \\
		& \quad - \sum^{N-1}_{t=1} \int \Big[ \eta_{t, 1} (x_t, y_{1:N}) l(x_t, y_{1:N}) l(y_{1:t}) + \eta_{t, 2} (x_t, y_{1:N}) u(x_t, y_{1:N}) u(y_{1:t}) \Big] \lambda(dx_t, dy_{1:N}) \\
		& \quad + \sum^{N-1}_{t=1} \int \Big[ \eta_{t, 3} (x_t, y_{1:N}) u(x_t, y_{1:N}) l(y_{1:t}) + \eta_{t, 4} (x_t, y_{1:N}) l(x_t, y_{1:N}) u(y_{1:t}) \Big] \lambda(dx_t, dy_{1:N}) \\
		& \quad - \sum^{N-1}_{t=1} \int \Big[ \eta_{t, 5} (x_t, y_{1:N}) l(x_t, y_{1:t}) l(y_{1:N}) + \eta_{t, 6} (x_t, y_{1:N}) u(x_t, y_{1:t}) u(y_{1:N}) \Big] \lambda(dx_t, dy_{1:N}) \\
		& \quad + \sum^{N-1}_{t=1} \int \Big[ \eta_{t, 7} (x_t, y_{1:N}) u(x_t, y_{1:t}) l(y_{1:N}) + \eta_{t, 8} (x_t, y_{1:N}) l(x_t, y_{1:t}) u(y_{1:N}) \Big] \lambda(dx_t, dy_{1:N}).
	\end{aligned}
\end{equation}
Finally, we can state the dual problem as
\begin{equation}\label{prob:dual} 
	D = \sup_{\Phi_c}  D(\phi, \varphi, \alpha, \beta, \gamma, \eta, \kappa, \theta),
\end{equation}
where we omit the time subscript in $\phi, \varphi, \alpha, \beta$, etc., for simplicity.

Loosely speaking, McCormick relaxations modify the subhedging portfolio $(\phi, \varphi, \alpha, \beta)$ in \eqref{simple_f_dual} by considering
\begin{equation}
	c(x_{1:N}, y_{1:N}) + \kappa(x_{1:N}, y_{1:N}) - \theta(x_{1:N}, y_{1:N}) + \Psi(x_{1:N}, y_{1:N}, \gamma, \eta; u, l)
\end{equation}
in lieu of $c(x_{1:N}, y_{1:N})$. Additionally, the objective function \eqref{eq:dual_obj} is augmented with terms originating from capacity constraints and McCormick relaxations. However, elucidating these additional terms poses challenges due to the imposition of numerous inequalities in the McCormick envelopes.

We establish strong duality $P = D$, as defined in \eqref{prob:capacity} and \eqref{prob:dual}, through a two-step proof. First, we extend the capacity-constrained case presented in \citet[Theorem 1]{korman2015elementary} to encompass non-compact supports, where neither martingale conditions nor McCormick relaxations are imposed. Although our Lemma \ref{lem:capa} is acknowledged in the literature \citep{korman2015elementary}, a precise proof has not been provided in previous work. For later use, let $\Pi(g, h; u)$ represent the set of couplings with density $ 0 \leq f \leq u$ and marginals $g$ and $h$ on $\R$. While the proof focuses on one-dimensional cases for simplicity, the extension to multi-dimensional cases is analogous. The main idea involves a modification of \citet[Proposition 1.22]{villani2003topics}, specifically tailored to address the capacity constraint. Unlike Theorem \ref{thm:primal}, Lemma \ref{lem:capa} assumes lower semicontinuous (l.s.c.) costs. This condition is needed since the proof employs continuous and bounded functions for cost (payoff) approximation.
\begin{lemma}\label{lem:capa}
	Suppose the following conditions hold:
	\begin{enumerate}[label={\arabic*.}]
		\item The set of couplings $\Pi(g, h; u)$ is not empty;
		\item The upper bound $u$ satisfies $0 \leq u \in L^1(\R \times \R) \cap L^\infty(\R \times \R)$. Marginal probability densities $g, h \in L^\infty(\R)$ and have finite first moments;
		\item The cost $c: \R \times \R \rightarrow (-\infty, \infty]$ is l.s.c. and $c \geq - C(1 + |x| + |y|)$ for some constant $C \geq 0$.
	\end{enumerate}
	Then strong duality holds:
	\begin{equation}\label{eq:capa_dual}
		\inf_{f \in \Pi(g, h; u)} P(f)  = \sup_{\Psi_c} F(\phi, \varphi, \kappa),
	\end{equation}
	where $P(f)$ is defined in \eqref{eq:pf},
	\begin{equation*}
		F(\phi, \varphi, \kappa) := \int \phi(x) g(x) dx + \int \varphi(y) h(y) dy - \int \kappa(x, y) u(x, y) dxdy,
	\end{equation*}
	and
	\begin{align*}
		\Psi_c :=  \Big\{  (\phi, \varphi, \kappa) \Big| & \phi(x) + \varphi(y) \leq c(x, y) + \kappa(x, y), \, \kappa(x, y) \geq 0, \\
		& \kappa \in L^1(udxdy), \phi \in L^1(gdx), \varphi \in L^1(hdy) \Big\}.
	\end{align*}
	The infimum is also attained. Moreover, it does not change the value of the supremum on the right-hand side if we restrict $(\kappa, \phi, \varphi)$ to be continuous and bounded.
\end{lemma}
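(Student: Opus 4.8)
\textbf{Proof proposal for Lemma \ref{lem:capa}.}

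The plan is to adapt the classical Kantorovich duality proof via the Fenchel--Rockafellar theorem, as in \citet[Proposition 1.22]{villani2003topics}, but with the cost $c$ replaced by a modified functional that absorbs the capacity constraint $f \le u$. Concretely, I would work on the space $E = C_b(\R \times \R)$ (or a suitable subspace allowing the linear growth of the test functions against which $g,h$ integrate), with dual pairing against finite signed measures, and consider two convex functionals. The first, $\Theta_1$, encodes the cost inequality: $\Theta_1(\psi) = 0$ if $\psi(x,y) \le c(x,y)$ for all $(x,y)$ and $+\infty$ otherwise --- this is the standard piece. The second, $\Theta_2$, encodes both the marginal constraints and the capacity constraint: for $\psi$ of the form $\psi(x,y) = \phi(x) + \varphi(y) + \kappa(x,y)$ with $\kappa \ge 0$, set $\Theta_2(\psi) = \int \phi\, g\, dx + \int \varphi\, h\, dy - \int \kappa\, u\, dx\,dy$, suitably defined as an infimum over all such decompositions, and $+\infty$ if no such decomposition exists. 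The key computation is then that the Fenchel conjugates $\Theta_1^*$ and $\Theta_2^*$, evaluated on a measure $\pi$, force $\pi \ge 0$, $\pi$ to have marginals $g,h$, and crucially $\pi \le u\,\lambda$ (from the $\kappa$ term: pairing $-\kappa$ against $\pi$ and $-\kappa$ against $u\,dx\,dy$ must balance, which is exactly the mechanism by which the inequality constraint appears), so that $\Theta_1^*(\pi) + \Theta_2^*(-\pi)$ reproduces $-\inf_{f \in \Pi(g,h;u)} P(f)$.

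The steps, in order: (i) verify the qualification hypothesis of Fenchel--Rockafellar --- there must exist a point where one functional is finite and the other is continuous; here one takes $\psi \equiv -M$ for $M$ large, using that $c$ is bounded below by $-C(1+|x|+|y|)$ only, so some care with the growth/weighting is needed, and the nonemptiness of $\Pi(g,h;u)$ guarantees the primal value is finite (not $+\infty$), while $c \ge -C(1+|x|+|y|)$ and the finite first moments of $g,h$ guarantee it is not $-\infty$; (ii) compute $\Theta_1^*$ and $\Theta_2^*$ and identify the constraint set they cut out, obtaining weak duality with no gap; (iii) handle the linear growth of $c$ by the standard truncation argument --- approximate $c$ from below by $c_n := \min(c, n)$, or rather $c \wedge (n(1+|x|+|y|))$-type truncations, apply the bounded case, and pass to the limit using monotone convergence on the primal side and the lower bound on $c$ on the dual side; the l.s.c. assumption on $c$ enters precisely here, since it lets us approximate $c$ pointwise from below by continuous bounded functions (inf-convolution / Baire approximation), which is what makes the $\Theta_1^*$ computation yield exactly the constraint $\psi \le c$; (iv) establish primal attainment by a compactness argument --- the feasible set $\Pi(g,h;u)$ is tight (marginals fixed) hence relatively weakly compact, and is closed under weak convergence (the constraint $f \le u$, i.e. $\pi(A) \le \int_A u$, passes to weak limits since it can be tested against nonnegative $C_b$ functions, and fixed marginals likewise), while $P$ is weakly l.s.c. by the l.s.c. and lower-bounded-by-linear-growth of $c$ together with uniform integrability from the fixed marginals; (v) finally, the remark that one may restrict $(\kappa,\phi,\varphi)$ to $C_b$ without changing the supremum follows because the Fenchel--Rockafellar argument is \emph{conducted} over $C_b$ test functions from the outset, so the $C_b$-restricted supremum already equals the primal value, and relaxing to $L^1$ can only increase it while weak duality shows it cannot exceed the primal.

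The main obstacle I expect is step (iii), the passage from compactly-supported/bounded costs to the stated generality with only linear lower bound and l.s.c. regularity: one must simultaneously truncate $c$ from above (to get boundedness, for the Fenchel--Rockafellar step) and control the resulting dual potentials so they remain integrable against $g\,dx$, $h\,dy$, and $u\,dx\,dy$ in the limit --- the linear growth of $c$ and the finite first moments of the marginals are exactly what is needed, but making the limiting dual potentials $\phi,\varphi$ land in $L^1(g\,dx)$, $L^1(h\,dy)$ rather than merely being defined $\lambda$-a.e. requires an explicit estimate (e.g. $|\phi(x)| \le C(1+|x|) + \text{const}$ along the approximating sequence, exploiting that the optimal $\phi$ in the bounded problem can be taken $c$-concave hence inheriting the modulus of $c$). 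A secondary technical point is that $c$ taking the value $+\infty$ is allowed; one must check the truncation $c \wedge n$ still converges monotonically to $c$ and that the set $\{c < \infty\}$ carries enough admissible couplings --- this is subsumed in hypothesis 1, nonemptiness of $\Pi(g,h;u)$, provided (as is implicit) that this set contains a coupling with finite cost, which should be argued or assumed.
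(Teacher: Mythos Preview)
Your high-level plan matches the paper's: apply Fenchel--Rockafellar as in \citet[Proposition 1.22]{villani2003topics}, with an extra mechanism to encode $f\le u$. But the specific mechanism you propose does not work. You place $\kappa$ inside a single-variable functional $\Theta_2$ on $C_b(\R\times\R)$, defined as the infimum of $\int\phi g+\int\varphi h-\int\kappa u$ over decompositions $\psi=\phi+\varphi+\kappa$ with $\kappa\ge 0$. That infimum is identically $-\infty$: given any such decomposition, replace $(\phi,\kappa)$ by $(\phi-M,\kappa+M)$; the constraints are preserved and the value drops by $M(1+\int u\,d\lambda)$. (Flipping the sign to $\psi=\phi+\varphi-\kappa$ does not help: the shift $(\phi+M,\kappa+M)$ then changes the value by $M(1-\int u\,d\lambda)$, and $\int u\,d\lambda\ge 1$ is forced by $\Pi(g,h;u)\ne\emptyset$, strictly in all nondegenerate cases.) So $\Theta_1+\Theta_2\equiv-\infty$ and the duality is vacuous. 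The paper avoids this by working on the \emph{product} space $C_b(\R^2)\times C_b(\R^2)$, treating $(b,\kappa)$ as a genuine two-component variable with dual $(l_1,l_2)$; one functional imposes $b=\phi+\varphi$, $\kappa\le 0$, the other imposes $b\ge -c+\kappa$, $\kappa\le 0$, and the qualification point is $(0,-1)$ after first shifting so that $c\ge\varepsilon>0$. The second dual component $l_2$ is exactly what produces $\pi\le u\,d\lambda$ when the conjugates are computed.

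Two further points. First, the topological dual of $C_b(\R\times\R)$ is strictly larger than the space of finite signed measures, so ``dual pairing against finite signed measures'' is not available for free; the paper decomposes $l_1=\pi+R$ and uses \citet[Lemmas 1.24--1.25]{villani2003topics} to show $R=0$ under the marginal condition, and then argues separately that $\pi\ll\lambda$ with density $\le u$ via a regularity/Urysohn argument. Second, your step (iii) is more laborious than necessary: rather than truncating $c$ from above by $c\wedge n$ and chasing limits of dual potentials, the paper shifts once so that $c\ge\varepsilon>0$, then approximates the l.s.c.\ $c$ monotonically from below by $C_b$ functions \emph{inside} the computation of $\Theta^*$, invoking monotone convergence; this sidesteps entirely the obstacle you flagged. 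Your steps (iv) and (v) are fine; the paper gets attainment directly from the ``max'' in the Fenchel--Rockafellar conclusion, and the $C_b$-versus-$L^1$ sandwich is exactly as you describe.
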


To establish the strong duality result using Lemma \ref{lem:capa}, we additionally assume that the upper and lower bounds are continuous and bounded in Theorem \ref{thm:dual}.
\begin{theorem}\label{thm:dual}
	Suppose the following conditions hold:
	\begin{enumerate}[label={\arabic*.}]
		\item $\cM(\bar{g}, \bar{h}; l, u)$ is not empty;
		\item The upper bound $u$ satisfies $0 \leq u \in L^1(\R^N \times \R^N) \cap  C_b(\R^N \times \R^N)$. The lower bound $l$ satisfies $0 \leq l \in L^1(\R^N \times \R^N) \cap C_b(\R^N \times \R^N)$. The probability densities $g_t, h_t \in L^\infty(\R)$ have finite first moments;
		\item The cost (payoff) function $ c: \R^N \times \R^N \rightarrow (-\infty, \infty]$ is l.s.c. and $c \geq - C(1 + \sum^N_{i=1}|x_i| + \sum^N_{j=1}|y_j|)$ for some constant $C \geq 0$.
	\end{enumerate}
	Then the strong duality holds: $P = D$.
\end{theorem}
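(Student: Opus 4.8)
The plan is to apply the Fenchel--Rockafellar duality theorem in a suitable pairing of topological vector spaces, in the same spirit as the classical MOT duality of \cite{beiglbock2013model} but with the additional capacity, McCormick, and martingale constraints built into the primal admissible set. First I would set up the primal problem $P = \inf_{f \in \cM(\bar g, \bar h; l, u)} P(f)$ as the minimization of a convex (indeed linear) functional over the convex set $\cM(\bar g, \bar h; l, u)$, noting that all the constraints defining this set --- the marginal constraints, the capacity constraints $l \le f \le u$, the eight McCormick inequalities (3) together with $w_1 = w_2$, the corresponding anticausal inequalities (4), and the martingale conditions (2) --- are either linear equalities or linear inequalities in $(f, w_1, w_2)$ (after reducing $w_1 = w_2 = w$, in $(f,w)$). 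The key observation is that the martingale condition, written as $\int (x_{t+1} - x_t)\, \mathbf{1}_A(x_{1:t}, y_{1:t}) f\, d\lambda = 0$ for bounded continuous test sets, is linear in $f$; this is what makes the whole admissible region convex despite the fact that the \emph{original} causality constraint \eqref{eq:conti} is bilinear --- the McCormick relaxation is precisely what convexifies it, by introducing the auxiliary variable $w$.

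Then I would verify the hypotheses of Fenchel--Rockafellar: pick the vector space of bounded signed measures (or $L^1$ densities) paired with $C_b$, check that the primal value $P$ is finite (nonemptiness of $\cM(\bar g, \bar h; l, u)$ gives an upper bound; the lower growth bound on $c$ together with the finite first moments of $g_t, h_t$ and $l, u \in L^1$ --- here using $l,u \in C_b$ from hypothesis 2 plus, implicitly, integrability, so one should be a little careful and perhaps impose $l,u\in L^1$ as well or note it follows --- gives a lower bound), and check the qualification/continuity condition at an interior feasible point. Identifying the convex conjugate of the constraint functional should reproduce exactly the dual objective $D(\phi,\varphi,\alpha,\beta,\gamma,\eta,\kappa,\theta)$ in \eqref{eq:dual_obj} and the feasibility set $\Phi_c$ with constraints \eqref{simple_f_dual}, \eqref{dual_Mc}, \eqref{dual_AntiMc}: the multipliers $\phi_t,\varphi_t$ pair against the marginal constraints, $\kappa,\theta$ against $f \le u$ and $f \ge l$, $\alpha_t, \beta_t$ against the martingale equalities, and $\gamma_{t,i}, \eta_{t,i}$ against the eight McCormick inequalities each; the constraint $w_1 = w_2$ and the requirement $w \in [0,1]$ (or $w \ge 0$) is what, after taking the conjugate in the $w$-variable, produces the sign conditions \eqref{dual_Mc}--\eqref{dual_AntiMc} (the coefficients of $w$ in the eight inequalities must combine to something nonpositive for the conjugate in $w$ to be finite). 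This is a bookkeeping-heavy but conceptually routine computation; $\Psi$ in \eqref{dual_f} collects exactly the ``constant'' (in $w$, affine in the data $l,u$) terms from each McCormick inequality.

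The role of Lemma \ref{lem:capa} is to handle the technical core that Fenchel--Rockafellar alone does not deliver cleanly on a non-compact space, namely the primal attainment combined with the fact that one may restrict the dual variables $(\kappa,\theta,\phi,\varphi,\dots)$ to $C_b$ without changing the supremum --- the capacity constraint $f \le u$ with $u \in C_b$ is what provides the uniform integrability needed for the weak-compactness argument (this is already the content of Theorem \ref{thm:primal}, whose proof I would reuse for attainment) and the approximation of l.s.c.\ costs by $C_b$ functions. Concretely, I would reduce the $N$-step, two-asset problem with martingale and McCormick constraints to the bare capacity-constrained transport setting of Lemma \ref{lem:capa} by absorbing the martingale and McCormick terms into a modified cost, as the ``loosely speaking'' paragraph after \eqref{prob:dual} suggests: replace $c$ by $c + \kappa - \theta + \Psi + \sum_t \alpha_t(x_{t+1}-x_t) + \sum_t \beta_t(y_{t+1}-y_t)$ and apply Lemma \ref{lem:capa} (in its multidimensional form, noted there to be analogous) to get that the remaining marginal-only duality has no gap, then take the supremum over the absorbed multipliers.

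The main obstacle I expect is twofold: first, making the interchange of infimum and supremum rigorous --- i.e.\ verifying the minimax / constraint-qualification condition of Fenchel--Rockafellar on the non-reflexive space of measures, which is exactly why the paper isolates Lemma \ref{lem:capa} as a separate non-compact-support extension of \cite{korman2015elementary}; and second, correctly propagating the weak-star-to-$C_b$ reduction through \emph{all} the multipliers simultaneously, since the McCormick multipliers $\gamma_{t,i}$ multiply products like $l(x_{1:N},y_t)\,l(x_{1:t})$ of (continuous, bounded) bound functions and one must ensure these products are integrable against the relevant marginal-type measures $\lambda(dx_{1:N}, dy_t)$ appearing in \eqref{eq:dual_obj} --- this is where hypothesis 2's assumption $l,u \in C_b$ (rather than merely $L^1$ as in Theorem \ref{thm:primal}) is genuinely used. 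I would handle this by first proving the duality with $C_b$ data and bounded multipliers, where everything is manifestly integrable, and only afterwards remark that the value is unchanged, exactly as Lemma \ref{lem:capa} permits for its three-variable subproblem.
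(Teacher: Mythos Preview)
Your ``concretely'' paragraph lands on essentially the paper's strategy: fix the multipliers $(\alpha,\beta,\gamma,\eta,\theta)$, absorb them into a modified cost $\tilde c$, invoke Lemma~\ref{lem:capa} on the remaining capacity-constrained transport problem in $(\phi,\varphi,\kappa)$, and then optimize over the absorbed multipliers. (One small correction: $\kappa$, the upper-capacity multiplier, belongs to the \emph{inner} problem handled by Lemma~\ref{lem:capa}, not to the modified cost; only $(\alpha,\beta,\gamma,\eta,\theta)$ are frozen in the outer layer.)

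Where your proposal has a genuine gap is in the resolution of the inf/sup interchange you correctly flag as the main obstacle. Lemma~\ref{lem:capa} delivers only the \emph{inner} duality $\sup_{\phi,\varphi,\kappa}(\cdots) = \inf_{f\in\Pi(\bar g,\bar h;u)}\int\tilde c\,f\,d\lambda$ for each fixed choice of outer multipliers; it does not by itself justify swapping the resulting $\sup_{\alpha,\beta,\gamma,\eta,\theta}\inf_f$ into $\inf_f\sup_{\alpha,\beta,\gamma,\eta,\theta}$. The paper handles that swap with a separate minimax theorem (\cite[Theorem~45.8]{strasser1985}, as in \cite[Theorem~3.1]{beiglbock2013model}), whose hypotheses are verified via (i) compactness of $\Pi(\bar g,\bar h;u)$ in the weak topology of $L^1$, coming from uniform integrability under the bound $u$, and (ii) continuity of $f\mapsto\int\tilde c\,f\,d\lambda$ on that compact set, which holds because $\tilde c$ has at most linear growth (note $\tilde c\notin C_b$ owing to the martingale increments $\alpha_t(x_{t+1}-x_t)$) and the marginals have finite first moments. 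You invoke compactness only for primal attainment, not as the engine of the minimax step, and your phrasing suggests Lemma~\ref{lem:capa} or Fenchel--Rockafellar constraint qualification alone handles the interchange---it does not. Once the minimax swap is made, the outer supremum is $+\infty$ whenever $f\notin\cM(\bar g,\bar h;l,u)$, which recovers $P$; the extension from bounded continuous $c$ to l.s.c.\ $c$ then proceeds by the standard monotone approximation, as you note.
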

As a remark on the dual attainment of McCormick MOT in the absolutely continuous case, we note that, even in the classic MOT setting, the corresponding set of dual variables lacks compactness. Without altering the formulation of the dual problem, dual attainment may fail in general. For a counterexample in the classic MOT setting, see \citet[Proposition 4.1]{beiglbock2016problem}. In contrast, for a one-dimensional, single time-step MOT problem, \cite{beiglbock2017complete} proved dual attainment in the quasi-sure sense, while \cite{beiglbock2019dual} recovered the pointwise dual attainment under additional assumptions on marginal distributions and/or cost function. \cite{lim2023MultiPeriodDuality} extended the results of \cite{beiglbock2017complete} to a multi-asset and multi-period setting. Furthermore, \cite{krvsek2024general} recently established dual attainment for the (bi-)causal OT problems. Although achieving dual attainment in general may be challenging, one could attempt to combine the techniques from \cite{beiglbock2017complete,lim2023MultiPeriodDuality,krvsek2024general} to prove the result (at least) in the quasi-sure sense, and under additional assumptions on the marginals (e.g., irreducibility). A natural question is whether Komlos-type arguments can be applied to our problem with complicated dual potentials. We leave this as future work.

\section{Numerical study}\label{sec:num}
In this section, we investigate the effectiveness of McCormick MOT using both synthetic and empirical data. Given the typical unavailability of risk-neutral densities, our empirical study includes the development of a calibration procedure capable of accommodating bid-ask spreads. We focus on basket and digital options.

\subsection{An illustrating example}
Assuming a zero risk-free rate for simplicity, we explore a scenario involving two stocks, $X$ and $Y$, covering two periods. An exotic option has a payoff given by $\max\{(X_2 - X_1)^2, (Y_2 - Y_1)^2 \}$. 

Setting the initial stock prices as $X_0 = 10$ and $Y_0 = 20$ for illustrative purposes, we posit that each stock can have only three different prices at each maturity. Specifically:
\begin{itemize}
	\item[(1)] At time $t = 1$, the stock price $X_1 = 11, 10, 9$ with probabilities $0.2, 0.6, 0.2$, respectively. $Y_1 = 24, 20, 16$ with probabilities $0.3, 0.4, 0.3$,  respectively;
	\item[(2)] At time $t = 2$, the stock price $X_2 = 20, 10, 0$ with probabilities $0.1, 0.8, 0.1$, respectively. $Y_2 = 26, 20, 14$ with probabilities $0.2, 0.6, 0.2$, respectively.
\end{itemize}
It is straightforward to solve this problem with optimization programs such as Gurobi. The classic MOT yields option price bounds of $[20.93, 24.40]$. To isolate the impact of McCormick relaxations, we utilize the lower and upper bounds implied by the marginal conditions: $L = 0$ and $U(x_{1:N}, y_t) = \min\{\mu_1(x_1), \ldots, \mu_N(x_N), \nu_t(y_t)\}$. Employing the McCormick MOT refines the bounds to $[21.50, 24.40]$. Despite the non-convex nature of bicausal MOT, (due to the small size of the problem) the program successfully addresses the problem by brute-force, providing bounds of $[21.64, 24.40]$. This simple example highlights the potential of McCormick MOT, prompting further exploration in more general cases.

\subsection{A calibration method of risk-neutral densities}
In the empirical case, McCormick MOT also needs risk-neutral distributions as input parameters. When calibrating $\mu_1$ and $\mu_2$ independently, there is no guarantee of maintaining convex order. To address this issue, we propose a methodology for calibrating the risk-neutral densities of a risky asset while maintaining the convex order between marginals. An alternative approach involves the separate calibration of $\mu_1$ and $\mu_2$, followed by convexification using the method outlined in \citet[Equation 3.1]{alfonsi2017sampling}. However, this approach often results in the modification of the support of $\mu_2$, raising uncertainties regarding the consistency of the modified $\mu_2$ with the options data.

Our construction is outlined as follows. The current time is denoted as $0$. At present, a finite number of European call options with known bid and ask prices are available. Recall that the set of maturities is represented as $T_1 < \ldots < T_t < \ldots < T_N$. For each maturity $T_t$, there are $n(t)$ options available, where $n(t)$ is not required to be the same across all maturities. Strike prices at a given maturity $T_t$ are sorted as $K_{1, t} < \ldots < K_{i, t} < \ldots < K_{n(t), t}$. In the context of considering the risk-neutral density of a single asset, we denote the underlying stock price at $T_t$ as $S_t := S_{T_t}$. Furthermore,
\begin{enumerate}[label={(\arabic*)}]
	\item $F_t$ represents the forward price of the stock, where the forward contract is initiated at time $0$ and delivered at time $T_t$;
	\item $D_t$ denotes the discount factor for time $T_t$, i.e., the price of a zero-coupon bond at time $0$ with maturity at time $T_t$;
	\item $A_{i, t}$ is the ask price of the call with the strike $K_{i, t}$ and maturity $T_t$. Similarly, $B_{i, t}$ denotes the bid price.  
\end{enumerate}
We refer to $C_{i, t}$ as a feasible price if $C_{i, t} \in [B_{i, t}, A_{i, t}]$. To streamline the framework, we assume that the interest rate is deterministic and that any dividends paid by the underlying assets are also deterministic if applicable. Following the approach of \cite{davis2007range}, we introduce scaled prices denoted as
\begin{equation}
	c_{i, t} = \frac{C_{i, t}}{D_t F_t}, \quad a_{i, t} = \frac{A_{i, t}}{D_t F_t}, \quad b_{i, t} = \frac{B_{i, t}}{D_t F_t}, \quad k_{i, t} = \frac{K_{i, t}}{F_t}.
\end{equation}

We write $\mu(S_{1:N} = s_{1:N})$ as a joint risk-neutral measure encompassing all maturities. For simplicity, we denote $\mu_t(S_t = s_t)$ as the marginal of $\mu$ on $S_t$. Similarly, $\mu(S_{1:t} = s_{1:t})$ refers to the marginal of $\mu$ on $S_{1:t}$. Suppose that the support of the stock price at time $T_t$, denoted as $\cS_t$, is given by the set of possible strikes at time $T_t$, i.e., $\cS_t = \{K_{1, t}, \ldots, K_{n(t), t}\}$. Consequently, $\mu$ is an $n(1) \times n(2) \times \ldots \times n(T)$ array.

To find the risk-neutral measure $\mu$, we consider the following optimization problem:
{\allowdisplaybreaks
	\begin{align}
		& \min_{c_{i, t}, \; \mu}  \sum_{\substack{1 \leq t \leq N \\ 1 \leq i \leq n(t)} } |c_{i, t} - a_{i, t}| + | c_{i, t} - b_{i, t}| \label{eq:mu_obj}\\
		& \textrm{ subject to} \sum_{s_t \in \cS_t} \left( \frac{s_t}{F_t} - k_{i, t}\right)^+ \mu_t(S_t = s_t) = c_{i, t}, \label{eq:price}\\
		& \qquad \qquad \sum_{s_{t+1} \in \cS_{t+1}} \frac{s_{t+1}}{F_{t+1}}\mu(s_{1:t}, s_{t+1}) = \frac{s_t}{F_t} \mu(S_{1:t} = s_{1:t}), \quad \forall \; s_{1:t}, \; 1 \leq t \leq N-1, \label{eq:martingale}\\
		& \qquad \qquad \sum_{s_t \in \cS_t} s_t \mu_t(S_t=s_t) = F_t, \quad \forall \; 1 \leq t \leq N, \label{eq:mean}\\
		& \qquad \qquad \mu(S_{1:N} = s_{1:N}) \geq 0, \quad \sum_{s_t \in \cS_t} \mu_t(S_t = s_t) = 1, \quad \forall \; 1 \leq t \leq N. \label{eq:prob}
	\end{align}%
}
The objective function \eqref{eq:mu_obj} aims to place the feasible price $c_{i, t} \in [b_{i, t}, a_{i, t}]$. The constraint \eqref{eq:price} represents the risk-neutral pricing formula, while \eqref{eq:martingale} enforces the martingale condition. \eqref{eq:mean} stems from the fact that the expected value of stock prices under a risk-neutral measure equals the forward price. The final condition ensures that $\mu_t$, the marginal of $\mu$ at time $T_t$, acts as a probability mass function. A notable advantage of our framework is that the program \eqref{eq:mu_obj} is still an LP problem, although the solution is not necessarily unique. When there is a risk-neutral measure $\mu$ that guarantees $c_{i, t} \in [b_{i, t}, a_{i, t}]$, the objective value is the lowest and equals $\sum_{i, t} |a_{i, t} - b_{i, t}|$. When the optimized objective value exceeds $\sum_{i, t} |a_{i, t} - b_{i, t}|$, it indicates the existence of a risk-neutral measure $\mu$, but certain options lack a feasible price $c_{i, t} \in [b_{i, t}, a_{i, t}]$ under $\mu$. Although only marginals $(\mu_1, \ldots, \mu_N)$ are required, our program in \eqref{eq:mu_obj} actually obtains a joint martingale probability distribution. Consequently, it ensures the convex order of marginals automatically. Besides, it addresses bid-ask spreads without solely relying on the mid-price, i.e., the average of bid and ask prices.

We summarize the above discussion in the following proposition.
\begin{proposition}\label{prop:cali}
	The program \eqref{eq:mu_obj}-\eqref{eq:prob} is an LP problem. If the optimal value equals to the lowest possible value, $\sum_{i, t} |a_{i, t} - b_{i, t}|$, then there exists a risk-neutral measure $\mu$, and each option has a (scaled) price $c_{i, t} \in [b_{i, t}, a_{i, t}]$ under $\mu$. If the optimal value is finite but greater than $\sum_{i, t} |a_{i, t} - b_{i, t}|$, then the risk-neutral measure with all (scaled) prices $c_{i, t} \in [b_{i, t}, a_{i, t}]$ does not exist. Instead, there exists a risk-neutral measure $\mu$, with at least one (scaled) price $c_{i, t} \notin [b_{i, t}, a_{i, t}]$ under $\mu$.
\end{proposition}

\subsection{Basket options}\label{sec:basket}
\subsubsection{Equal weights}\label{sec:equal}
The effectiveness of McCormick relaxations depends on the characteristics of the option payoff. Since causality imposes conditional independence, we find that McCormick relaxations are more effective when the payoff function is path-dependent for both assets. Consider a basket call option with an Asian-style payoff, averaged across two stocks $(X, Y)$ and two periods:
\begin{equation}\label{eq:payoff}
	\max\left\{\frac{X_1+ X_2 + Y_1 + Y_2}{4} - K, 0 \right\}.
\end{equation}
First, we employ probability bounds implied by marginal constraints. Although these bounds are typically loose, the McCormick MOT can still yield tighter price intervals compared to the classic MOT framework.

We obtained the zero-coupon yield curve, forward prices, and European option prices from OptionMetrics via Wharton Research Data Services (WRDS). In the first example, consider an investor interested in stocks from two pharmaceutical companies, namely Gilead Sciences (Ticker: GILD) and GSK plc (Ticker: GSK). According to OptionMetrics data, the liquidity of European options for these stocks is considered moderate. For example, on 28 February 2023, GSK options show an open interest of 4458 and a volume of 167401, while GILD options have an open interest of 4989 and a volume of 233003. We utilize Gurobi to solve the linear program \eqref{eq:mu_obj} and derive the risk-neutral marginals $(\mu_1, \mu_2)$ and $(\nu_1, \nu_2)$. To assess the reduction in basket option price bounds, we introduce the following ratio:
\begin{equation}\label{eq:ratio}
	\frac{\text{McCormick Max} - \text{McCormick Min}}{ \text{MOT Max} - \text{MOT Min}}.
\end{equation}

For the tenors, we select $T_1$ as the maturity date closest to the current time $T_0 = 0$. The second maturity, $T_2$, is approximately one month after $T_1$. In the context of the payoff \eqref{eq:payoff}, the strike $K$ is determined as the integer rounded off from the average of forward prices with delivery dates $T_1$ and $T_2$. Consequently, the basket option is close to the at-the-money level. We employ Gurobi to solve both the classic and McCormick MOT, treating them as LP problems. For the computational complexity, our problems typically have around 10,000 variables and 1,000 linear inequality constraints, which can be efficiently solved in less than a second on a laptop. Table \ref{tab:GILD} provides examples of price limits for the basket option and the ratios defined in \eqref{eq:ratio}. Considering $T_0$ ranging from February 28, 2022, to February 28, 2023 (the most recent one-year horizon in WRDS), we derive ratio values and present a histogram in Figure \ref{fig:GILD}. The average ratio is approximately 96.10\%, which signifies a 3.90\% reduction in price bounds. In the best-case scenario, the bounds are reduced by 20\%. Our code and Excel files detailing the ratios are accessible at \url{https://github.com/hanbingyan/McCormick}.

When considering stocks with a more liquid option market, the reduction in the price gap is less pronounced. As an illustration, on February 28, 2023, European options for JPMorgan Chase (JPM) reported an open interest of 54,034 and a volume of 1,189,048, while European options for Morgan Stanley (MS) exhibited an open interest of 22,885 and a volume of 669,638. The corresponding results are presented in Table \ref{tab:JPM}, and Figure \ref{fig:JPM} provides the histogram of the ratios. On average, the application of the McCormick MOT results in a 1.08\% reduction in the bounds. This outcome can be attributed to larger supports of risk-neutral densities for JPM and MS.

\begin{table}
	\tiny
	\centering
	\begin{tabular}{ccccccccc}
		\hline
		$T_0$ &    $T_1$ &  $T_2$  &  Strike &  MOT Max &  MOT Min &  McCormick Max &  McCormick Min &  Ratio \\
		\hline
		2022-02-28 & 2022-03-04 & 2022-04-01 & 50 & 1.8727 & 1.3967 & 1.8709 & 1.4041 & 0.9805 \\
		2022-04-26 & 2022-04-29 & 2022-05-27 & 52 & 1.9539 & 1.3820 & 1.9528 & 1.3840 & 0.9947 \\
		2022-06-23 & 2022-06-24 & 2022-07-22 & 52 & 1.3084 & 0.9185 & 1.2730 & 0.9209 & 0.9032 \\
		2022-08-26 & 2022-09-02 & 2022-09-30 & 47 & 1.1897 & 0.8783 & 1.1718 & 0.8978 & 0.8796 \\
		2022-10-24 & 2022-10-28 & 2022-11-25 & 50 & 1.4890 & 0.6643 & 1.4836 & 0.6684 & 0.9885 \\
		2022-12-20 & 2022-12-23 & 2023-01-20 & 59 & 1.6527 & 1.1156 & 1.6197 & 1.1222 & 0.9265 \\
		2023-02-17 & 2023-02-24 & 2023-03-24 & 60 & 1.2341 & 0.7453 & 1.2311 & 0.7566 & 0.9708 \\
		\hline
	\end{tabular}
	\caption{Basket options on GILD and GSK. Time is in the Year-Month-Day format. }\label{tab:GILD}
\end{table}

\begin{table}
	\tiny
	\centering
	\begin{tabular}{ccccccccc}
		\hline
		$T_0$ &    $T_1$ &  $T_2$  &  Strike &  MOT Max &  MOT Min &  McCormick Max &  McCormick Min &  Ratio \\
		\hline
		2022-02-28 & 2022-03-04 & 2022-04-01 & 116 & 3.5667 & 1.7953 & 3.5637 & 1.8074 & 0.9914 \\
		2022-04-26 & 2022-04-29 & 2022-05-27 & 102 & 3.3090 & 1.3595 & 3.3036 & 1.3735 & 0.9901 \\
		2022-06-23 & 2022-06-24 & 2022-07-22 & 93 & 2.8894 & 1.2243 & 2.8822 & 1.2314 & 0.9914 \\
		2022-08-19 & 2022-08-26 & 2022-09-23 & 104 & 2.5952 & 0.7202 & 2.5733 & 0.7402 & 0.9777 \\
		2022-10-17 & 2022-10-21 & 2022-11-18 & 96 & 3.0706 & 0.8126 & 3.0667 & 0.8311 & 0.9901 \\
		2022-12-13 & 2022-12-16 & 2023-01-13 & 113 & 2.8779 & 0.7043 & 2.8770 & 0.7269 & 0.9892 \\
		2023-02-10 & 2023-02-17 & 2023-03-17 & 120 & 2.5703 & 0.3847 & 2.5640 & 0.3964 & 0.9918 \\
		\hline
	\end{tabular}
	\caption{Basket options on JPM and MS.}\label{tab:JPM}
\end{table}

%

\begin{figure}
	\centering
	\begin{minipage}{0.45\textwidth}
		\centering
		\includegraphics[width=0.95\textwidth]{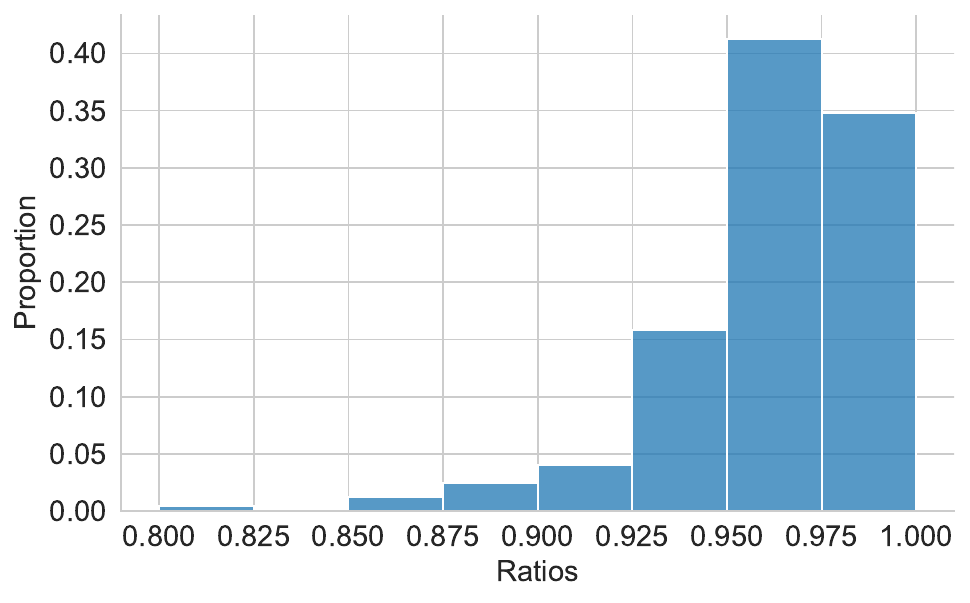}
		\subcaption{GILD and GSK as underlying assets}\label{fig:GILD}
	\end{minipage}
	\begin{minipage}{0.45\textwidth}
		\centering
		\includegraphics[width=0.95\textwidth]{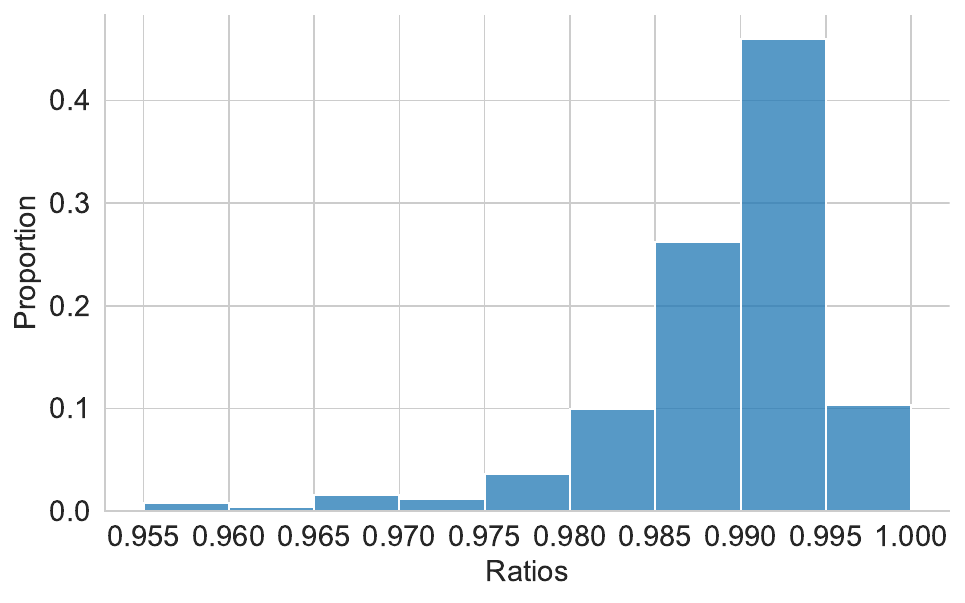}
		\subcaption{JPM and MS as underlying assets}\label{fig:JPM}
	\end{minipage}%
	\caption{Distribution of price ratios.}
\end{figure}

\subsubsection{Tighter bounds on probability masses}\label{sec:tighter_bnds}
In previous instances, the natural lower and upper bounds $L$ and $U$ on probability masses may have been relatively loose, necessitating tighter constraints using additional information. Our methodology integrates seamlessly with other knowledge available to the agent. For instance, actively traded exotic options, such as digital options, can provide upper or lower bounds on probability densities due to their specific payoffs. Although we do not have access to such proprietary data, we demonstrate the effectiveness of our method using artificial bounds. Readers can easily apply the same procedure once relevant data become available. In this subsection, we focus on the basket option \eqref{eq:payoff} on GILD and GSK. Suppose the agent imposes the following constraints on the coupling $\pi$:
\begin{align}
	\pi(x_{1:N}, y_{1:N}) & \leq 0.01, \label{eq:upperbd} \\
	\pi(x_{1:N}, y_{1:N}) & \geq \zeta \min\{\mu_1(x_1), \ldots, \mu_N(x_N), \nu_1(y_1), \ldots, \nu_N(y_N)\}, \text{ for } (x_{1:N}, y_{1:N}) \in \cA. \label{eq:lowerbd}
\end{align}
Here, $\zeta > 0$ serves as a shrinking parameter, and $\cA$ is a subset of paths where \eqref{eq:lowerbd} holds. Empirically, there are typically more than 2,000 paths available for $(x_{1:N}, y_{1:N})$. The upper bound of 0.01 in \eqref{eq:upperbd} ensures that the coupling $\pi$ does not concentrate excessively on too few paths; this value is chosen arbitrarily for demonstration purposes. For the lower bound \eqref{eq:lowerbd}, the agent specifies that certain paths of interest should have a sufficiently large probability. To make \eqref{eq:upperbd} and \eqref{eq:lowerbd} feasible, we can set $\zeta \leq 0.01$. Lacking prior knowledge of the underlying assets GILD and GSK, the set $\cA$ is chosen arbitrarily for demonstration. For each support $\cX_1$, $\cX_2$, $\cY_1$, and $\cY_2$, we select one-third of the possible stock prices in an equally spaced manner, resulting in the set $\cA$ containing at most $1/81 \approx 1.23\%$ of the paths. The set of McCormick martingale couplings $\cM(\bar{\mu}, \bar{\nu}; L, U)$ can be empty after imposing \eqref{eq:upperbd} and \eqref{eq:lowerbd}. However, the existence of a McCormick martingale coupling can be numerically checked using LP solvers such as Gurobi. 

Table \ref{tab:tighter} presents the statistics of the ratios obtained under different shrinking parameters $\zeta$ and maturities $T_2$. We consider $\zeta \in \{ 0.005, 0.01 \}$ for simplicity. The second maturity date $T_2$ can be 1, 2, 3 or 4 weeks after the first maturity date $T_1$, where $T_1$ is the maturity date closest to the current time $T_0 = 0$. Overall, McCormick MOT yields much better price intervals when tighter bounds $L$ and $U$ are available. On average, the eight cases in Table \ref{tab:tighter} achieve a reduction of 12.26\%. The best instance, with a minimum ratio of 0.1405, indicates a reduction of more than 80.00\%. Notably, the reductions here are relative to the classic MOT method with the same tighter constraints $L$ and $U$. Table \ref{tab:tighter} demonstrate that McCormick relaxations can further enhance price intervals, primarily because the inequalities like \eqref{eq:Mc_cau} become more effective. Besides, the performance of McCormick relaxations is better when the maturity $T_2$ is shorter since the number of paths is smaller, and the upper bound \eqref{eq:upperbd} is more likely to be binding. Generally, a larger parameter $\zeta$ results in greater improvements in price bounds. However, this is not always the case, as the set $\cM(\bar{\mu}, \bar{\nu}; L, U)$ is more likely to be empty with larger $\zeta$.

\begin{table}[h]
	\small
	\centering
	\begin{tabular}{ccccc}
		\hline
		$\zeta$ & Index for $T_2$ & Ratios Max & Ratios Mean & Ratios Min \\
		\hline		
		0.005 & 1 & 0.9827 & 0.8464 & 0.5655 \\
		0.005 & 2 & 0.9709 & 0.8807 & 0.7407 \\
		0.005 & 3 & 0.9891 & 0.8908 & 0.2829 \\
		0.005 & 4 & 0.9882 & 0.8898 & 0.2247 \\
		0.01 & 1 & 0.9802 & 0.8806 & 0.7013 \\
		0.01 & 2 & 0.9627 & 0.8677 & 0.5256 \\
		0.01 & 3 & 0.9835 & 0.8505 & 0.1405 \\
		0.01 & 4 & 0.985 & 0.9125 & 0.5726 \\
		\hline
	\end{tabular}
	\caption{Basket option on GILD and GSK with tighter bounds \eqref{eq:upperbd} and \eqref{eq:lowerbd}. }\label{tab:tighter}
\end{table}

Although \cite{eckstein2021robust} used a different approach by incorporating additional maturities, a comparative analysis of reductions with their findings provides information on the significance of our results. In \citet[Table 4.3]{eckstein2021robust}, focusing on uniform marginals and spread options, the reported price intervals are $[0.335, 8.337]$ for a single period, $[0.416, 8.254]$ for two periods and $[0.776, 7.920]$ for four periods. Consequently, in their four-period scenario, the calculated ratio is $(7.920 - 0.776)/(8.337-0.335) = 89.28\%$, falling within the range observed in our Table \ref{tab:tighter}.

\subsubsection{Arbitrary weights}\label{sec:arbitrary_w}
To further assess the performance of McCormick MOT on basket options, we generalize the weight parameter for stock prices and define the option payoff as
\begin{equation}
	\max\left\{ w_1 X_1 + w_2 X_2 + w_3 Y_1 + w_4 Y_2 - K, 0 \right\},
\end{equation}
where $w_1 + w_2 + w_3 + w_4 = 1$ and $w_i \geq 0$. Unlike the previous subsection, we only use the natural upper and lower bounds implied by marginal constraints.

We consider 100 arbitrarily chosen weight tuples $w := (w_1, w_2, w_3, w_4)$. For each tuple, we test it on the underlying assets GILD and GSK, which have 247 distinct initial dates for the option. The strike price is set as the average of the forward prices for $X_1$ and $Y_1$, delivered at $T_1$. After calculating the ratios for each $w$ on each date, we obtain a $247 \times 100$ array of ratios, and present the statistics as follows. 

\begin{figure}
	\centering
	\begin{minipage}{0.45\textwidth}
		\centering
		\includegraphics[width=0.95\textwidth]{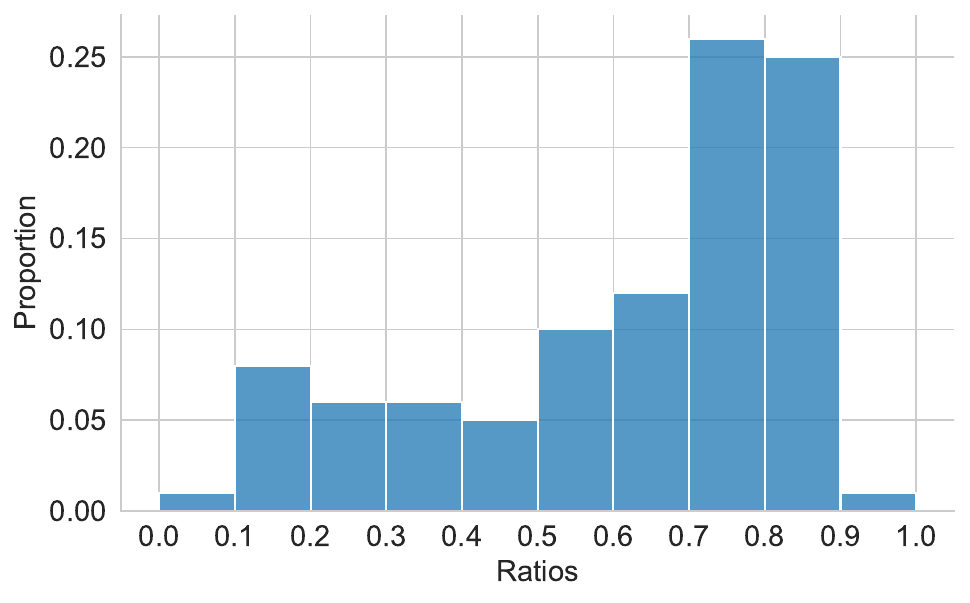}
		\subcaption{Minimum ratio achievable for a given weight $w$ across 247 dates.}\label{min_time}
	\end{minipage}
	\begin{minipage}{0.45\textwidth}
		\centering
		\includegraphics[width=0.95\textwidth]{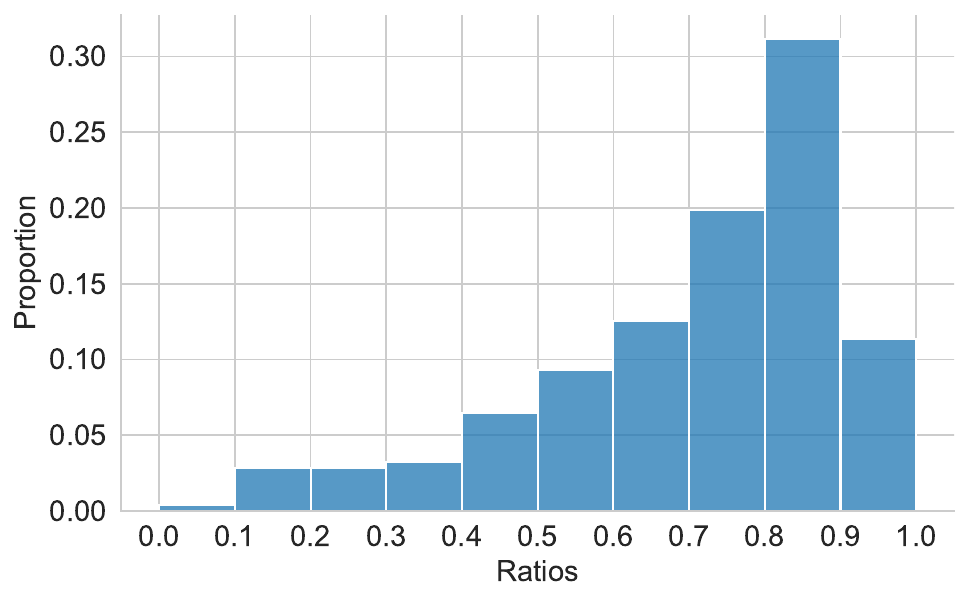}
		\subcaption{Minimum ratio achievable on a given date across 100 weight parameters.}\label{min_weight}
	\end{minipage}
	\begin{minipage}{0.45\textwidth}
		\centering
		\includegraphics[width=0.95\textwidth]{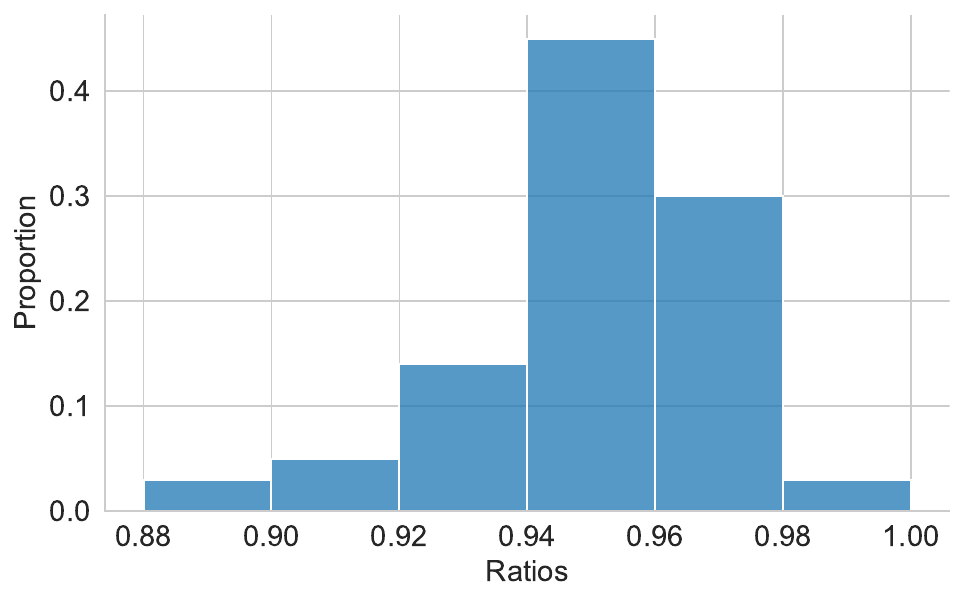}
		\subcaption{Average ratios across all dates with fixed weight parameters.}\label{mean_time}
	\end{minipage}%
	\caption{Distribution of price ratios.}
\end{figure}

\begin{enumerate}[label=(\arabic*)]
	\item For each given weight $w$, Figure \ref{min_time} shows the minimum ratio achievable across all dates. For most weight parameters, there is at least one date where the ratio falls below $90.00\%$. Approximately half of the weight parameters result in a minimum ratio between $70.00\%$ and $90.00\%$. 
	\item Similarly, we can analyze the data from another perspective. For each given date, Figure \ref{min_weight} shows the minimum ratio achievable across 100 possible weights. For over $85.00\%$ of the $247$ days considered, there exists at least one weight that yields a ratio lower than $0.9$.
	\item For each weight parameter, we calculate the average ratio across all dates. Figure \ref{mean_time} presents the histogram of these average ratios. Most weight parameters yield an average ratio between $94.00\%$ and $98.00\%$, consistent with the equal-weight case. However, there is a weight tuple that can achieve an average ratio of approximately $88.40\%$.
\end{enumerate}

%
%

\subsection{Digital options}\label{sec:digital}
Motivated by digital options, we present another example where the option payoffs are represented by indicator functions at atoms. In this case, the expectation corresponds to the probability at specific atom points. For numerical stability, we also set the notional amount as $\$10,000$. Then the expected payoff is given by
\begin{equation}\label{eq:digital_payoff}
	10000 \int c(x_{1:2}, y_{1:2}) \pi(dx_{1:2}, dy_{1:2}) = 10000 \times \pi(X_1 = x_1, X_2 = x_2, Y_1 = y_1, Y_2 = y_2).
\end{equation} 
In all cases for digital options, only natural bounds implied by marginal constraints are used.

\subsubsection{An illustrating example}

Suppose the initial stock prices are $X_0 = 2$ and $Y_0 = 3$, respectively.
\begin{itemize}
	\item[(1)] At time $t = 1$, the stock price $X_1 = 1, 2, 3$ with probabilities $0.01, 0.98, 0.01$, respectively. $Y_1 = 2, 3, 4$ with probabilities $0.4, 0.2, 0.4$,  respectively;
	\item[(2)] At time $t = 2$, the stock price $X_2 = 1, 2, 3$ with probabilities $0.04, 0.92, 0.04$, respectively. $Y_2 = 2, 3, 4$ with probabilities $0.4, 0.2, 0.4$, respectively.
\end{itemize}
Here, the asset $X$ exhibits very low volatility, while the asset $Y$ shows very high volatility.

Consider the payoff $c = 10000 \times \one_{\{X_1 = x_1, X_2 = x_2, Y_1 = y_1, Y_2 = y_2 \}}$. There are $ 3\times 3 \times 3 \times 3 = 81$ combinations of $(x_1, x_2, y_1, y_2)$. Given the small problem size, the bicausal MOT can be solved using a brute-force approach. We observe the following pattern: Bicausal or McCormick MOT may not improve prices, resulting in a ratio of 1.0. Only nine out of the 81 cases show price reductions. However, when the proposed methods do improve prices, the reductions are significant. The best case is shown in Table \ref{tab:toy_digital} for the event $\{X_1 = 2, X_2 = 3, Y_1 = 3, Y_2 = 3\}$, where both bicausal and McCormick MOT lead to the same price. In this case, the ratio is $0.0204$, indicating a $97.96\%$ reduction in the price gap.

The average ratio across all $81$ cases is $89.91\%$. However, the mean may not be the best metric here. The ratio is either very close to $0$ or exactly $1$. Therefore, we also report median values in subsequent examples.

Another observation is that bicausal and McCormick MOT yield the same prices in all $81$ cases, highlighting the effectiveness of McCormick relaxations.

\begin{table}[H]
	\centering
	\begin{tabular}{ccccc}
		\hline
		MOT Max &  MOT Min &  McCormick Max &  McCormick Min &  Ratio \\
		\hline
		300.0 &  0.0 &    61.2245 &  55.1020 & 0.0204 \\
		\hline
	\end{tabular}
	\caption{The best case in the illustrative example. Bicausal MOT achieves identical prices to McCormick MOT and is therefore omitted.}\label{tab:toy_digital}
\end{table}

In general, we can also consider payoffs depending on more general events, such as $\{ X_1 \leq 2, X_2 = 3, Y_2 \geq 3 \}$. The corresponding results are presented in Table \ref{tab:toy_events}. In this case, both the bicausal and McCormick MOT yield the same prices.

\begin{table}[H]
	\centering
	\begin{tabular}{ccccc}
		\hline
		MOT Max &  MOT Min &  McCormick Max &  McCormick Min &  Ratio \\
		\hline
		300.0 &  0.0      &    183.6735 &  177.5510 & 0.0204 \\
		\hline
	\end{tabular}
	\caption{Prices for the payoff as the indicator function of the event $\{ X_1 \leq 2, X_2 = 3, Y_2 \geq 3 \}$.}\label{tab:toy_events}
\end{table}

\subsubsection{Empirical examples}

In Section \ref{sec:basket}, the marginals calibrated from empirical data result in thousands of possible cases, making it impractical to solve the bicausal MOT in a reasonable time. Therefore, this section only focuses on McCormick MOT for digital options. Notably, it can significantly reduce prices even with liquid options on assets such as JPM and MS. 

As in Section \ref{sec:basket}, for each initial date $T_0$, the first maturity date $T_1$ is the one closest to $T_0$, and the second maturity date $T_2$ is roughly one month after $T_1$. Payoffs are indicator functions of all possible atom points $\{X_1 = x_1, X_2 = x_2, Y_1 = y_1, Y_2 = y_2\}$ after fixing the dates. Due to the large number of combinations, Table \ref{tab:three} presents results only for three arbitrarily chosen initial dates $T_0$. A similar pattern to the illustrative example is observed: The mean value of the ratio is significantly low, while the median is higher since prices are not reduced in about half of the cases. 

\begin{table}[H]
	\centering
	\begin{tabular}{ccccc}
		\hline
		Initial Date & Marginals Size & Ratios Mean &  Ratios Median & Ratios Min \\
		\hline
		02-28-2022 & $10 \times 9 \times 10 \times 8$	& 0.7885 & 1.0 & 0.0466  \\
		05-24-2022 & $11 \times 12 \times 8 \times 9$ & 0.7536 & 0.9118 & 0.0158  \\
		08-25-2022 & $10 \times 11 \times 7 \times 12$ & 0.6685 & 0.7925 & 0.0061 \\
		\hline
	\end{tabular}
	\caption{Price ratios for three arbitrary initial dates considering all possible atom points.}\label{tab:three}
\end{table}

The best cases for each initial date are presented in Table \ref{tab:best_cases}. It demonstrates that McCormick MOT is significantly more effective when the option payoffs directly depend on the probabilities. A possible explanation is that the bicausality condition and its McCormick relaxations impose pointwise constraints, which may exclude certain events.
\begin{table}[H]
	\centering
	\begin{tabular}{ccccc}
		\hline
		Initial Date  & MOT Max &  MOT Min &  McCormick Max &  McCormick Min\\
		\hline
		02-28-2022 & 50.0031 &  0.0  & 2.3286 & 0.0  \\
		05-24-2022 & 22.0455 & 0.0 & 0.3480 & 0.0  \\
		08-25-2022 & 11.2728 & 0.0 & 0.0685 & 0.0 \\
		\hline
	\end{tabular}
	\caption{Prices obtained in the best cases for each initial date in Table \ref{tab:three}.}\label{tab:best_cases}
\end{table}

\begin{figure}
	\centering
	\begin{minipage}{0.45\textwidth}
		\centering
		\includegraphics[width=0.95\textwidth]{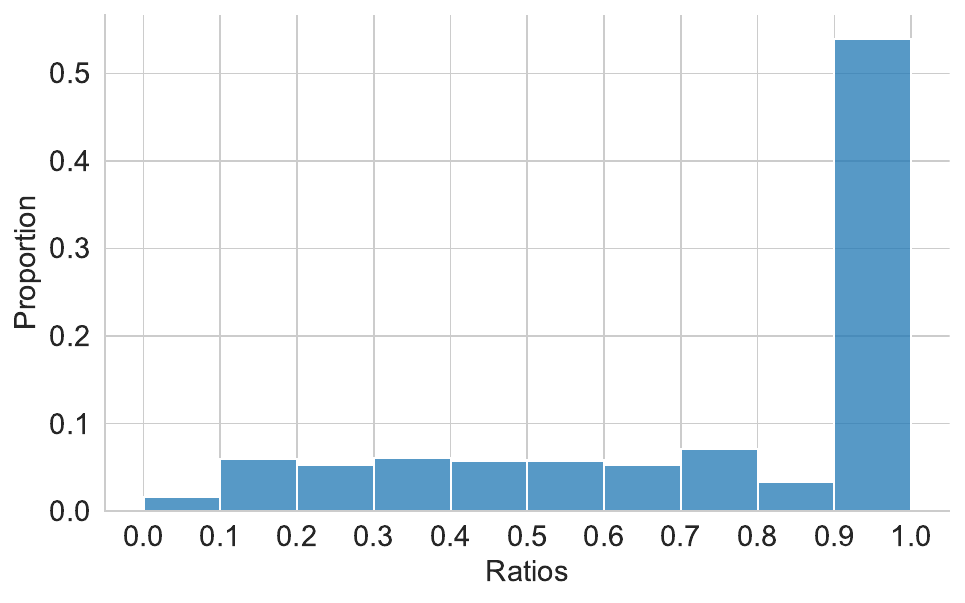}
		\subcaption{February 28, 2022}\label{Feb28}
	\end{minipage}
	\begin{minipage}{0.45\textwidth}
		\centering
		\includegraphics[width=0.95\textwidth]{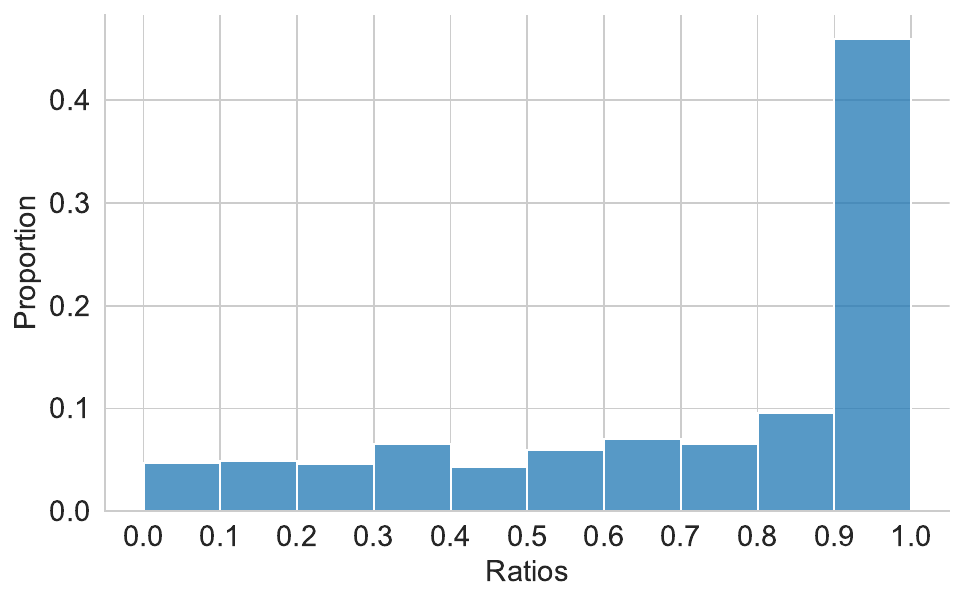}
		\subcaption{May 24, 2022}\label{May24}
	\end{minipage}
	\begin{minipage}{0.45\textwidth}
		\centering
		\includegraphics[width=0.95\textwidth]{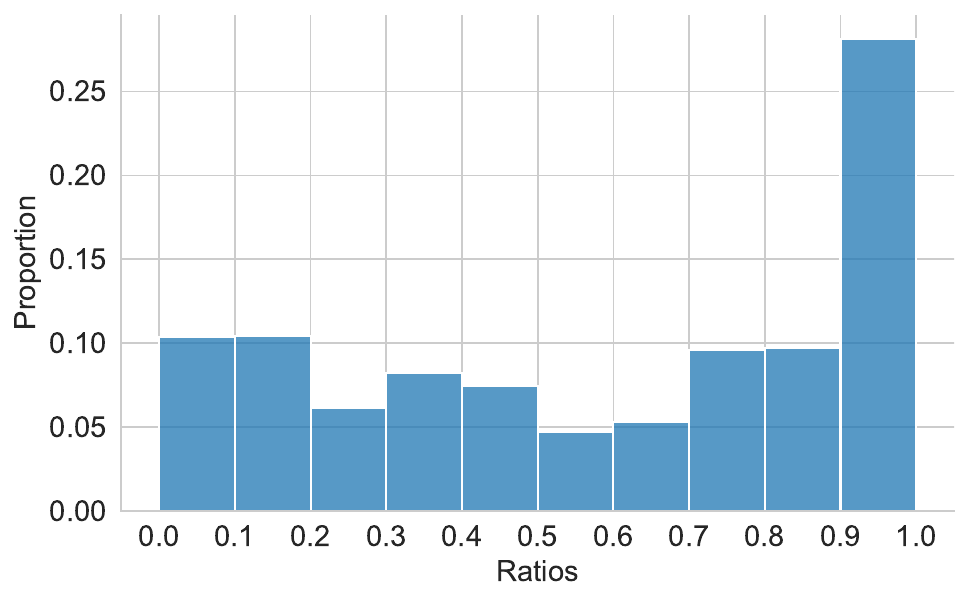}
		\subcaption{August 25, 2022}\label{Aug25}
	\end{minipage}%
	\caption{Distribution of price ratios on different dates, with JPM and MS as the underlying assets.}
\end{figure}

To further illustrate the behavior of McCormick MOT on digital payoffs, Figures \ref{Feb28}, \ref{May24}, and \ref{Aug25} show the distribution of price ratios for the three initial dates listed in Table \ref{tab:three}. While McCormick MOT may not always lead to price reductions, when improvements do occur, they are substantial. Moreover, in all three scenarios, the proportion of cases with price reductions exceeds $45.00\%$. 
%
%

To test on additional initial dates, we randomly select $100$ dates from our data. Due to computational constraints, only $50$ digital payoff cases are chosen randomly for each date. Table \ref{tab:random} presents the statistics of these $5,000$ ratios for each pair of underlying assets. Figures \ref{JPMrand} and \ref{GILDrand} show the corresponding ratio distributions. Notably, the distributions of the ratios are similar, regardless of whether the underlying stocks are more liquid or not.

\begin{table}[H]
	\centering
	\begin{tabular}{cccc}
		\hline
		Assets  & Ratios Mean &  Ratios Median & Ratios Min \\
		\hline
		JPM and MS 	& 0.7702 & 1.0 & 0.0045  \\
		GILD and GSK  & 0.7983 & 1.0 & 0.0045  \\
		\hline
	\end{tabular}
	\caption{Statistics of price ratios for digital options with randomly chosen dates.}\label{tab:random}
\end{table}

\begin{figure}[H]
	\centering
	\begin{minipage}{0.45\textwidth}
		\centering
		\includegraphics[width=0.95\textwidth]{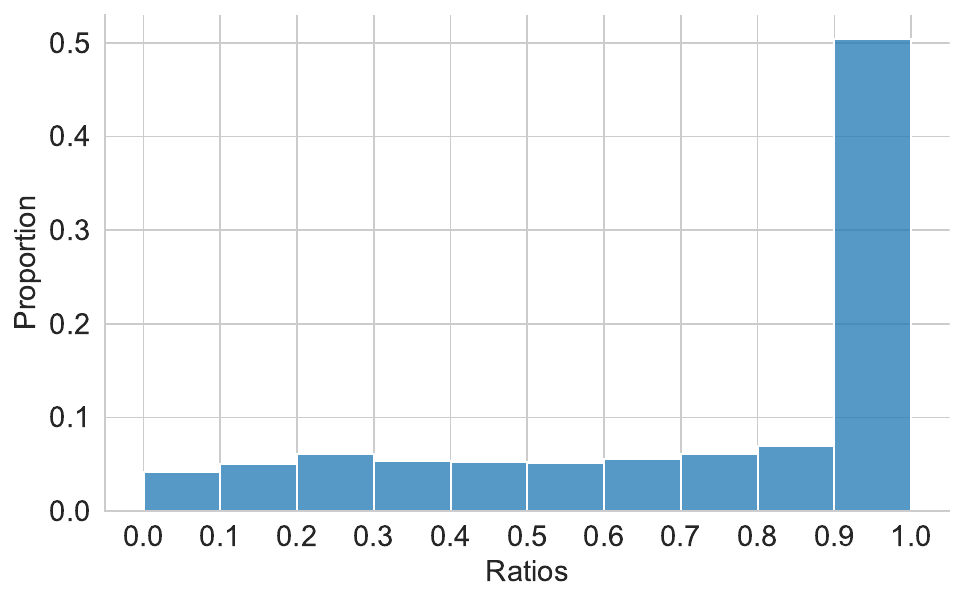}
		\subcaption{JPM and MS as the underlying assets}\label{JPMrand}
	\end{minipage}
	\begin{minipage}{0.45\textwidth}
		\centering
		\includegraphics[width=0.95\textwidth]{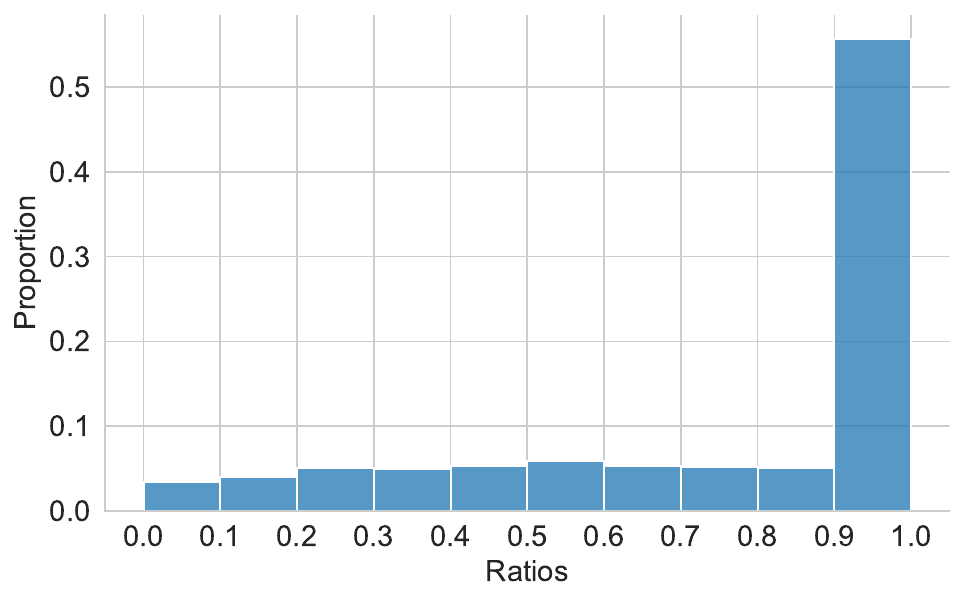}
		\subcaption{GILD and GSK as the underlying assets}\label{GILDrand}
	\end{minipage}%
	\caption{Distribution of price ratios.}
\end{figure}

%
%

\section*{Acknowledgments}
The authors would like to thank the anonymous referees and editors for their careful reading and valuable comments, which have greatly improved the manuscript. This research began when Bingyan Han was a postdoctoral researcher in the Department of Mathematics at the University of Michigan. He expresses gratitude to the University of Michigan for providing support and an atmosphere conducive to this work.


\appendix
\section{Proofs of results}

\begin{proof}[Proof of Proposition \ref{prop:equi}]
	Define $f^h(x_{1:N}): = \int h_t(y_{1:t})  \pi(dy_{1:t}|x_{1:N})$ with $h_t \in C_b(\cY_{1:t})$. It is worth noting that $f^h(x_{1:N})$ is defined $\pi$-almost surely. A coupling $\pi \in \Pi(\bar{\mu}, \bar{\nu})$ is causal if and only if
	\begin{equation}\label{eq:fh}
		f^h(x_{1:N}) = \int f^h(x_{1:t},\bar{x}_{t+1:N}) \pi(d\bar{x}_{t+1:N} | x_{1:t}), \quad \pi\text{-a.s.}
	\end{equation}
	for all $1 \leq t\leq N$ and all $f^h$ defined with $h_t \in C_b(\cY_{1:t})$. We emphasize that \eqref{eq:fh} holds $\pi$-almost surely, and the conditional kernel $\pi(d\bar{x}_{t+1:N} | x_{1:t})$ depends on the choice of $\pi$ since the joint distribution of $X_{1:N}$ is not fixed.
	
	Furthermore, \eqref{eq:fh} holds if and only if for any $g_t \in C_b (\cX_{1:N})$, we have
	\begin{align}
		\int g_t(x_{1:N})\left[ f^h(x_{1:N}) - \int f^h(x_{1:t},\bar{x}_{t+1:N}) \pi(d\bar{x}_{t+1:N} | x_{1:t}) \right] \pi(dx_{1:N}) = 0.
	\end{align}
	Interchanging the expectations on $x_{1:N}$ and $\bar{x}_{t+1:N}$ yields
	\begin{align}\label{eq:interchange}
		\int f^h(x_{1:N})  \left[ g_t(x_{1:N}) -\int g_t(x_{1:t}, \bar{x}_{t+1:N})  \pi(d\bar{x}_{t+1:N} | x_{1:t}) \right] \pi(d x_{1:N}) = 0.
	\end{align}
	Subsequently, utilizing the tower property, \eqref{eq:interchange} is equivalent to
	\begin{align}
		\int h_t(y_{1:t})\left [ g_t(x_{1:N}) - \int g_t(x_{1:t}, \bar{x}_{t+1:N}) \pi(d\bar{x}_{t+1:N} | x_{1:t}) \right ] \pi(dx_{1:N}, dy_{1:N}) = 0. 
	\end{align}
\end{proof}

\begin{proof}[Proof of Proposition \ref{prop:bicausal_primal}]
	In the finite discrete case with the Euclidean topology, the convergence of probability masses $\pi^n(x_{1:N}, y_{1:N})$ to $\pi(x_{1:N}, y_{1:N})$ is pointwise. Since $\cM_{bc}(\bar{\mu}, \bar{\nu})$ is characterized by linear and quadratic equalities, it is closed. Moreover, $\cM_{bc}(\bar{\mu}, \bar{\nu})$ is bounded and non-empty. By the Weierstrass theorem, the infimum is attained.
	
	For the second part, suppose that the problem has an optimizer $\bar{\pi}$ in the interior of $\cM_{bc}(\bar{\mu}, \bar{\nu})$; otherwise, there is nothing to prove. Moreover, we can find a point $\pi^*$ on the boundary $\partial \cM_{bc}$ that minimizes the Euclidean distance between $\partial \cM_{bc}$ and $\bar{\pi}$. Let $\pi^m := 2 \bar{\pi} - \pi^*$, such that $\bar{\pi}$ is the midpoint of the line segment joining $\pi^m$ and $\pi^*$.
	
	Crucially, $\pi^m$ is in $\cM_{bc}(\bar{\mu}, \bar{\nu})$. If not, there would exist another point between $\bar{\pi}$ and $\pi^m$ on $\partial \cM_{bc}$ that is closer to $\bar{\pi}$ than $\pi^*$, leading to a contradiction. 
	
	The objective values corresponding to these transport plans satisfy
	\begin{equation}\label{c1}
		\int c d\bar{\pi} = \frac{1}{2} \int c d\pi^m + \frac{1}{2} \int c d \pi^*.
	\end{equation}
	Since $\bar{\pi}$ is an optimizer, it follows that
	\begin{equation}\label{c2}
		\int c d\bar{\pi} \leq \int c d\pi^m \quad \text{and} \quad \int c d\bar{\pi} \leq \int c d \pi^*.
	\end{equation}
	Hence, from \eqref{c1}-\eqref{c2}, we obtain
	\begin{equation}
		\int c d\bar{\pi} = \int c d\pi^m \quad \text{and} \quad \int c d\bar{\pi} = \int c d \pi^*.
	\end{equation}
	Then $\pi^*$ on $\partial \cM_{bc}$ must be an optimizer of the problem.
\end{proof}

\begin{proof}[Proof of Theorem \ref{thm:primal}]
	If we substitute the cost $c$ with $c + C(1 + \sum^N_{i=1}|x_i| + \sum^N_{j=1}|y_j|)$, it only introduces a finite constant to the original problem, thanks to the finite first moments and marginal conditions. Consequently, we can assume $c \geq 0$ henceforth.  
	
	We endow $L^1(\R^N \times \R^N)$ with the weak topology, where $f_n$ converges to $f$ in the weak topology if and only if 
	\begin{equation*}
		\int f_n v d \lambda \rightarrow \int f v d \lambda, \quad \forall \, v \in L^\infty(\R^N \times \R^N).
	\end{equation*}
	Since every function in $\cM(\bar{g}, \bar{h}; l, u)$ is bounded by the same integrable function $u$, \citet[Theorem 4.7.20 (v)]{bogachev2007measure} shows that $\cM(\bar{g}, \bar{h}; l, u)$ has a compact closure under the weak topology of $L^1(\R^N \times \R^N)$. If we can demonstrate that $\cM(\bar{g}, \bar{h}; l, u)$ is closed, then it is also compact. 
	
	Consider a sequence of functions $\{f_n\}$ in $\cM(\bar{g}, \bar{h}; l, u)$ that converges to $f$ in the weak topology of $L^1(\R^N \times \R^N)$. This convergence is expressed as:
	\begin{equation*}
		\lim_{n \rightarrow \infty} \int f_n v d\lambda = \int f v d\lambda. 
	\end{equation*}
	First, select $v = \mathbf{1}_{A}$ with a measurable set $A$ in $\R^N \times \R^N$. Since it holds that 
	$$\int f_n(x_{1:N}, y_{1:N}) v(x_{1:N}, y_{1:N}) d\lambda \leq \int u(x_{1:N}, y_{1:N}) v(x_{1:N}, y_{1:N}) d\lambda,$$
	we can obtain that $f \leq u$ $\lambda$-a.e. Similarly, we can show $f \geq l$, $\lambda$-a.e. Next, consider $v(x_{1:N}, y_{1:N}) = \psi(x_1) \in L^\infty(\R)$. Then, $f(x_{1:N}, y_{1:N})$ has the marginal $g_1$ on $x_1$. Other marginal constraints can also be verified. To establish that $f$ satisfies the martingale constraint, we observe that
	\begin{align*}
		& \Big|\int_{\R^N \times \R^N} f_{n}(x_{1:N}, y_{1:N}) \alpha_t(x_{1:t}, y_{1:t}) (x_{t+1} - x_{t}) d\lambda \\
		& \qquad - \int_{\R^N \times \R^N} f(x_{1:N}, y_{1:N}) \alpha_t(x_{1:t}, y_{1:t}) (x_{t+1} - x_{t}) d\lambda \Big| \\
		& \leq \Big|\int_{K} f_{n}(x_{1:N}, y_{1:N}) \alpha_t(x_{1:t}, y_{1:t}) (x_{t+1} - x_{t}) d\lambda - \int_{K} f(x_{1:N}, y_{1:N}) \alpha_t(x_{1:t}, y_{1:t}) (x_{t+1} - x_{t}) d\lambda \Big|\\
		& \quad + \Big|\int_{(\R^N \times \R^N)\backslash K} f_{n}(x_{1:N}, y_{1:N}) \alpha_t(x_{1:t}, y_{1:t}) (x_{t+1} - x_{t}) d\lambda \\
		& \qquad \quad - \int_{(\R^N \times \R^N)\backslash K} f(x_{1:N}, y_{1:N}) \alpha_t(x_{1:t}, y_{1:t}) (x_{t+1} - x_{t}) d\lambda \Big|,
	\end{align*}
	where $K := [-a, a]^N \times [-a, a]^N$ is a compact set in $\R^N \times \R^N$, and $\alpha_t \in L^\infty$. With a given compact set $K$, the first term converges to zero since $x_t$ and $x_{t+1}$ are bounded on $K$. The second term is bounded by $\| \alpha_t \|_\infty \varepsilon$ since $f_n$ and $f$ have marginals $(\bar{g}, \bar{h})$ with finite first moments. Here, $\varepsilon$ is a generic constant such that 
	\begin{equation*}
		\int_{(\R^N \times \R^N)\backslash K} f(x_{1:N}, y_{1:N}) |x_{t}| d\lambda < \varepsilon.
	\end{equation*}
	This same constant $\varepsilon$ also bounds other similar integrals, where $f_n$ and/or $x_{t+1}$ are used instead. When the side width $a \rightarrow \infty$, we have $\varepsilon \rightarrow 0$. Consequently, $f$ also satisfies the martingale constraint. In summary, the limit $f \in \cM(\bar{g}, \bar{h}; l, u)$, and the compactness follows.
	
	The proof of the lower semicontinuity for the objective is standard. When $c$ is bounded, the functional
	\begin{equation*}
		f \mapsto \int c f d\lambda
	\end{equation*}
	is continuous, following from the definition of weak topology. In the case where $c$ is nonnegative but unbounded, there exists a sequence of bounded measurable functions $\{ c_n \}$ converging increasingly to $c$. Using the monotone convergence theorem, we have
	\begin{equation*}
		\int f c d\lambda = \int f \sup_n c_n d\lambda = \sup_n \int f c_n d \lambda,
	\end{equation*}
	demonstrating that $\int f c d\lambda$ is a supremum of continuous functionals and, consequently, l.s.c. The claim then follows as the infimum is attained for an l.s.c. functional on a compact set.
\end{proof}

\begin{proof}[Proof of Lemma \ref{lem:capa}]
	Similarly to the proof of \citet[Proposition 1.22]{villani2003topics}, we can substitute the cost $c$ for $c+ C(1+ |x| + |y|) + \varepsilon$ for an arbitrarily large $\varepsilon>0$. With the finite first moments and marginal conditions, this modification adds the same finite constant to both sides of the strong duality. The result for the modified cost implies the result for the original cost function $c$. Hence, we proceed with the assumption $c \geq \varepsilon>0$ in the subsequent analysis.  
	
	Define the functionals $\Xi$ and $\Theta$ as follows:
	\begin{equation}
		\begin{aligned}
			\Xi: & (b(x, y), \kappa(x, y)) \in C_b(\R \times \R) \times C_b(\R \times \R) \\
			& \mapsto \left\{\begin{array}{lll}
				& \int \phi(x) g(x) dx + \int \varphi(y) h(y) dy - \int \kappa(x, y) u(x, y) dxdy, \\
				& \text{ if } b(x, y) = \phi(x) + \varphi(y) \text{ and } \kappa(x, y) \leq 0; \\
				& + \infty, \text{ else. }\\ 
			\end{array}\right.
		\end{aligned}
	\end{equation}
	and
	\begin{equation}
		\begin{aligned}
			\Theta: & (b(x, y), \kappa(x, y)) \in C_b(\R \times \R) \times C_b(\R \times \R) \\
			& \mapsto \left\{\begin{array}{lll}
				&0, \text{ if } b(x, y) \geq -c(x, y) + \kappa(x, y) \text{ and } \kappa(x, y) \leq 0; \\
				& + \infty, \text{ else. }\\ 
			\end{array}\right.
		\end{aligned}
	\end{equation}
	$\Xi$ is well-defined. Specifically, if $\phi(x) + \varphi(y) = \tilde{\phi}(x) + \tilde{\varphi}(y)$ for all $x$ and $y$, then $\phi = \tilde{\phi} + s$ and $\varphi = \tilde{\varphi} - s$ with a constant $s$; see \citet[Section 1.1, p.27]{villani2003topics} for the same argument. Analogously, $\Theta$ is also well-defined.
	
	To verify the assumptions in Fenchel-Rockafellar duality theorem \cite[Theorem 1.9]{villani2003topics}, we observe that $\Xi$ and $\Theta$ are finite at $(b, \kappa) = (0, -1)$. Moreover, under conditions $\|b\|_\infty \leq \varepsilon$ and $\|\kappa + 1\|_\infty \leq \frac{1}{2}$, it follows that $\kappa \leq 0$ and $-c + \kappa \leq -\varepsilon \leq b$. Consequently, $\Theta(0, -1) = 0$, and therefore $\Theta$ is continuous at $(0, -1)$. 
	
	A direct calculation yields
	\begin{align*}
		& \inf_{(b, \kappa) \in C_b \times C_b} \Big[ \Theta(b, \kappa) + \Xi(b, \kappa) \Big] \\ 
		& = \inf_{(\phi, \varphi, \kappa) \in C_b \times C_b \times C_b} \Big[  \int \phi(x) g(x) dx + \int \varphi(y) h(y) dy - \int \kappa(x, y) u(x, y) dxdy \Big| \\
		& \hspace{4cm} \phi(x) + \varphi(y) \geq -c(x, y) + \kappa(x, y), \kappa \leq 0 \Big] \\
		& = - \sup_{(\phi, \varphi, \kappa) \in C_b \times C_b \times C_b} \Big[  \int \phi(x) g(x) dx + \int \varphi(y) h(y) dy - \int \kappa(x, y) u(x, y) dxdy \Big| \\
		& \hspace{4cm} \phi(x) + \varphi(y) \leq c(x, y) + \kappa(x, y), \kappa \geq 0 \Big],
	\end{align*}
	where in the last equality, we substitute $(\phi, \varphi, \kappa)$ with $(-\phi, -\varphi, -\kappa)$.
	
	A challenge arises due to the fact that the topological dual of $C_b(\R \times \R)$, denoted as $(C_b(\R \times \R))^*$, is greater than $M(\R \times \R)$, which represents the set of all finite signed measures. To calculate the Legendre-Fenchel transforms of $\Theta$ and $\Xi$, we take advantage of \citet[Lemmas 1.24 and 1.25]{villani2003topics} to deal with continuous linear functionals on $C_b(\R \times \R)$. In addition, we note that the topological dual of the product space $C_b(\R \times \R) \times C_b(\R \times \R)$ is the product of dual spaces, namely $(C_b(\R \times \R))^* \times (C_b(\R \times \R))^*$.
	
	With linear forms $(l_1, l_2) \in (C_b(\R \times \R))^* \times (C_b(\R \times \R))^*$, the Legendre-Fenchel transform of $\Theta$ is defined as follows:
	\begin{align*}
		& \Theta^*(-l_1, -l_2) \\
		& = \sup_{(b, \kappa) \in C_b \times C_b} \Big[ \ang{-l_1, b} + \ang{-l_2, \kappa}  \Big|  b(x, y) \geq -c(x, y) + \kappa(x, y), \; \kappa(x, y) \leq 0 \Big] \\
		& = \sup_{(b, \kappa) \in C_b \times C_b} \Big[ \ang{l_1, b - \kappa} + \ang{l_1 + l_2, \kappa}  \Big|  b(x, y) \leq c(x, y) + \kappa(x, y), \; \kappa(x, y) \geq 0 \Big],
	\end{align*}
	where $\ang{\cdot, \cdot}$ denotes the duality bracket \cite[p. xiv]{villani2003topics}.
	
	Next, we identify the conditions for $\Theta^*(-l_1, -l_2) < +\infty$ and compute $\Theta^*(-l_1, -l_2)$. If there exists $v \geq 0$ satisfying $\ang{l_1 + l_2, v} > 0$, then we can choose $b= \kappa = k v$ and send $k \rightarrow \infty$. Similarly, if there exists $v \geq 0$ and $\ang{l_1, v} < 0$, then we can consider $\kappa = 0$ and $b = -kv$ with $k \rightarrow \infty$. Therefore, $\Theta^*(-l_1, -l_2) < + \infty$ only when $l_1$ is nonnegative and $l_1 + l_2$ is nonpositive. 
	
	Furthermore, the Legendre-Fenchel transform of $\Xi$ is given by
	\begin{align*}
		\Xi^*(l_1, l_2) = \sup_{(b, \kappa) \in C_b \times C_b} \Big[ & \ang{l_1, b} + \ang{l_2, \kappa} - \int \phi(x) g(x) dx - \int \varphi(y) h(y) dy \\
		& + \int \kappa(x, y) u(x, y) dxdy \Big| b(x, y) = \phi(x) + \varphi(y) \text{ and } \kappa(x, y) \leq 0 \Big].
	\end{align*}
	To avoid $\Xi^*(l_1, l_2) = +\infty$, we first need
	\begin{equation}\label{eq:margin}
		\ang{l_1, \phi + \varphi} = \int \phi(x) g(x) dx + \int \varphi(y) h(y) dy.
	\end{equation}
	It is noteworthy that $l_1$ should be nonnegative, as argued above. Moreover, $l_1$ also acts continuously on the subset $C_0(\R \times \R)$ of $C_b(\R \times \R)$, where $C_0(\R \times \R)$ denotes the set of all continuous functions approaching 0 at infinity. Since the topological dual of $C_0(\R \times \R)$ is $M(\R \times \R)$, there exists a unique measure $\pi \in M(\R \times \R)$ such that
	\begin{equation*}
		\ang{l_1, v} = \int v(x, y) d\pi, \quad  \forall \, v \in C_0(\R \times \R). 
	\end{equation*}
	We can then decompose $l_1 = \pi + R$, where $R$ is a continuous linear functional with
	\begin{equation*}
		\ang{R, v} = 0, \quad  \forall \, v \in C_0(\R \times \R). 
	\end{equation*}
	Under the condition \eqref{eq:margin}, \citet[Lemma 1.25]{villani2003topics} shows that $R=0$, and $l_1 = \pi$ is a nonnegative measure with Borel marginals $gdx$ and $hdy$.
	
	Another requirement to avoid $\Xi^*(l_1, l_2) = +\infty$ is
	\begin{equation}\label{eq:l2}
		\ang{l_2, \kappa} + \int \kappa(x,y) u(x, y) dxdy \leq 0, \quad \forall \, \kappa(x, y) \leq 0, \, \kappa \in C_b(\R \times \R). 
	\end{equation}
	Since $l_1 + l_2$ is nonpositive and $l_1$ is a measure, then
	\begin{align}
		\ang{l_1 + l_2, v} = \int v(x, y) d \pi + \ang{l_2, v} \leq 0, \quad \forall \, v(x, y) \geq 0, \, v \in C_b(\R \times \R). 
	\end{align} 
	Together with \eqref{eq:l2}, setting $\kappa = - v$ yields
	\begin{equation}\label{eq:dom}
		\int v(x, y) d \pi \leq - \ang{l_2, v} \leq \int v(x, y) u(x, y) dxdy, \quad \forall \, v(x, y) \geq 0, \, v \in C_b(\R \times \R). 
	\end{equation}
	This implies that $\pi$ is absolutely continuous with respect to the Lebesgue measure $\lambda$ on the Borel $\sigma$-algebra of $\R \times \R$. Otherwise, suppose that the opposite is true. Then, there exists a Borel set $B$ such that $\lambda(B) = 0$ and $\pi(B) > 0$. Since $\pi$ is regular \cite[Proposition 7.17]{bertsekas1978stoch}, there exists a closed subset $F \subset B$ and $\pi(F) > \delta > 0$ for a small $\delta$. Similar to the proof of \citet[Proposition 7.18]{bertsekas1978stoch}, let $G_n =\{(x, y) \in \R \times \R| d((x, y), F) < 1/n \}$ with a sufficiently large $n$. By Urysohn's lemma, there exist continuous functions $0 \leq v_n \leq 1$ such that $v_n = 0$ on $G^c_n$ and $v_n = 1$ on $F$. Then
	\begin{align*}
		\pi(F) \leq \int v_n(x, y) d\pi \leq \int v_n(x, y) u(x, y) d\lambda \leq \|u\|_\infty \int v_n d\lambda \leq \|u\|_\infty \lambda(G_n),
	\end{align*}
	which implies
	\begin{align*}
		\pi(F) \leq \|u\|_\infty \lambda(\cap^{\infty}_{n=1} G_n) = \|u\|_\infty \lambda(F) \leq \|u\|_\infty \lambda(B) = 0,
	\end{align*}
	leading to a contradiction with $\pi(F) > \delta > 0$. By the Radon-Nikod\'ym theorem \cite[Theorem 9.3, p.98]{rogers1993diffusions}, the Radon-Nikod\'ym density exists: $f(x, y) = \frac{d\pi}{d\lambda},\, \lambda\text{-a.e.}$ 
	
	Furthermore, \eqref{eq:dom} implies $f(x, y) \leq u(x, y), \, \lambda\text{-a.e.}$ Otherwise, suppose $\lambda(B) > 0$, where $B := \{ (x, y) \in \R \times \R| f(x, y) > u(x, y)\}$. $B$ is a Borel set. Since $\lambda$ is regular, there exists a closed subset $F \subset B$ and $\lambda(F) > 0$. Introducing continuous and bounded functions $v_n$ similarly as before, \eqref{eq:dom} and the dominated convergence theorem lead to
	\begin{align*}
		\int \textbf{1}_F f d\lambda \leq \int v_n fd\lambda \leq \int v_n u d\lambda \leq \int \textbf{1}_{G_n} u d\lambda \rightarrow \int \textbf{1}_{F} u d\lambda,
	\end{align*}
	which contradicts the fact that $f > u$ on $F$ and $\lambda(F) > 0$.
	
	In summary, $\Theta^*(- l_1, -l_2) + \Xi^*(l_1, l_2) < +\infty$ requires that $l_1$ is a probability measure in $\Pi(g, h)$ with density $0 \leq f \leq u$ and $l_1 + l_2$ is nonpositive. Clearly, the reverse direction is also true: If $l_1$ is a probability measure in $\Pi(g, h)$ with density $0 \leq f \leq u$ and $l_1 + l_2$ is nonpositive, then $\Theta^*(- l_1, -l_2) + \Xi^*(l_1, l_2) < +\infty$. 
	
	Moreover, when $l_1$ is a probability measure in $\Pi(g, h)$ with density $0 \leq f \leq u$ and $l_1 + l_2$ is nonpositive, it follows that $\Xi^*(l_1, l_2) = 0$ and
	\begin{align*}
		\Theta^*(-l_1, -l_2) &= \sup_{b \in C_b} \Big[ \int b(x, y) f(x,y) dxdy \Big|  b(x, y) \leq c(x, y) \Big] = \int c(x, y) f(x,y) dxdy,
	\end{align*}
	because the l.s.c. cost $c \geq \varepsilon$ can be approximated pointwise by a monotonically increasing sequence of continuous and bounded functions. Then the equality above follows from the monotone convergence theorem. Therefore, the right-hand side of the Fenchel-Rockafellar duality reduces to
	\begin{align*}
		& \max_{(l_1, l_2) \in (C_b)^* \times (C_b)^*} \Big[ - \Theta^*(- l_1, -l_2) - \Xi^*(l_1, l_2) \Big] 
		& =  - \min_{f \in \Pi(g, h; u)} \int c(x, y) f(x,y) dxdy.
	\end{align*}
	We have proved the strong duality \eqref{eq:capa_dual}, albeit with continuous and bounded $(\phi, \varphi, \kappa)$. In the $L^1$ case, the claim follows from
	\begin{align*}
		\inf_{f \in \Pi(g, h; u)} P(f) = \sup_{\Psi_c \cap C_b} F(\phi, \varphi, \kappa) \leq \sup_{\Psi_c} F(\phi, \varphi, \kappa) \leq \inf_{f \in \Pi(g, h; u)} P(f),
	\end{align*}
	where the last inequality is a result of weak duality, similar to \citet[Proposition 1.5]{villani2003topics}. In addition, in the first equality, the notation $\Psi_c \cap C_b$ indicates that $(\phi, \varphi, \kappa)$ are also continuous and bounded.
\end{proof}

\begin{proof}[Proof of Theorem \ref{thm:dual}]
	Similarly to the proof of Lemma \ref{lem:capa}, we can assume the cost $c \geq 0$. 
	
	Consider a continuous and bounded cost $c$. Maximization in the dual problem can be achieved in two steps. First, we maximize over $(\phi, \varphi, \kappa)$ when other multipliers $(\alpha, \beta, \gamma, \eta, \theta)$ are given. This reduces to a dual problem with the upper capacity constraint in Lemma \ref{lem:capa}, but with a new cost given by
	\begin{equation}
		\begin{aligned}
			& \tilde{c}(x_{1:N}, y_{1:N}, \alpha, \beta, \gamma, \eta, \theta;u, l) \\
			&=  c(x_{1:N}, y_{1:N}) - \theta(x_{1:N}, y_{1:N})\\
			&\quad + \sum^{N-1}_{t=1} \alpha_t(x_{1:t}, y_{1:t}) (x_{t+1} - x_t) + \sum^{N-1}_{t=1} \beta_t(x_{1:t}, y_{1:t}) (y_{t+1} - y_t) \\
			&\quad + \Psi(x_{1:N}, y_{1:N}, \gamma, \eta; u, l).
		\end{aligned}
	\end{equation}
	
	On the basis of our assumptions, we have $c, \theta, \alpha, \beta, \gamma, \eta, u, l \in C_b$. Hence, $\tilde{c}$ is continuous, and $|\tilde{c}| \leq C(1 + \sum^N_{t=1} |x_t| + \sum^N_{t=1} |y_t|)$. An application of Lemma \ref{lem:capa} leads to
	\begin{align}
		D & =  \sup_{\substack{\alpha_t, \beta_t \\ \gamma_{t, i}, \, \eta_{t, i} \geq 0 \\ \theta \geq 0}} \sup_{\substack{\Phi_c \text{ with } (\alpha_t, \beta_t, \gamma_{t, i}, \, \eta_{t, i}, \theta) \text{ given }}} D(\phi, \varphi, \alpha, \beta, \gamma, \eta, \kappa, \theta) \\
		& = \sup_{\substack{\alpha_t, \beta_t \\ \gamma_{t, i}, \, \eta_{t, i} \geq 0 \\ \theta \geq 0}} \inf_{f \in \Pi(\bar{g}, \bar{h}; u)} \int \tilde{c}(x_{1:N}, y_{1:N}, \alpha, \beta, \gamma, \eta, \theta;u, l) f(x_{1:N}, y_{1:N}) d \lambda \\
		& = \inf_{f \in \Pi(\bar{g}, \bar{h}; u)} \sup_{\substack{\alpha_t, \beta_t \\ \gamma_{t, i}, \, \eta_{t, i} \geq 0 \\ \theta \geq 0}} \int \tilde{c}(x_{1:N}, y_{1:N}, \alpha, \beta, \gamma, \eta, \theta;u, l) f(x_{1:N}, y_{1:N}) d \lambda \label{eq:minimax} \\
		& = \inf_{f \in \cM(\bar{g}, \bar{h}; l, u)} \int c(x_{1:N}, y_{1:N}) f(x_{1:N}, y_{1:N}) d \lambda = P. \label{eq:inf_P}
	\end{align}
	\eqref{eq:minimax} follows from the minimax theorem, given in \citet[Theorem 45.8, p. 239]{strasser1985} or \citet[Theorem 3.1]{beiglbock2013model}. In fact, $\Pi(\bar{g}, \bar{h}; u)$ is compact in the weak topology of $L^1$. Since $\tilde{c}$ has a linear growth rate and the marginals have finite first moments, we can also prove that the objective $\int \tilde{c} f d\lambda$ is continuous in $f \in \Pi(\bar{g}, \bar{h}; u)$ using the weak topology of $L^1$, similar to the approach in \citet[Lemma 2.2]{beiglbock2013model}.
	
	The final equality \eqref{eq:inf_P} is derived from the observation that any violation of the martingale condition, the McCormick relaxations, or the lower bound results in an infinite value.
	
	In the case of a nonnegative and l.s.c. cost function $c$, the strong duality can be established through a standard approximation argument. For example, see the last part of the proof presented in \citet[Theorem 1.1]{beiglbock2013model}. 
\end{proof}

\begin{proof}[Proof of Proposition \ref{prop:cali}]
	The proof is straightforward. Clearly, the objective and constraints in \eqref{eq:mu_obj}-\eqref{eq:prob} are linear. If the optimal value is finite, then the optimizer $\mu$ exists. The constraints \eqref{eq:price}-\eqref{eq:prob} ensure that $\mu$ is a risk-neutral measure, yielding forward prices $F_t$ and option prices $c_{i, t}$. By the property of absolute values, the smallest value of $|c_{i, t} - a_{i, t}| + | c_{i, t} - b_{i, t}|$ is given by $|a_{i, t} - b_{i, t}|$, which implies that $c_{i, t} \in [b_{i, t}, a_{i, t}]$. The claim follows.
\end{proof}

\section{Additional details on the derivation of the dual problem \eqref{prob:dual}}\label{appen:derive_dual}

With the Lagrange multipliers introduced in Section \ref{sec:dual}, the Lagrangian should include the following components:
\begin{enumerate}[label=(\arabic*)]
	\item Terms related to the marginal constraints, given by
	\begin{equation*}
		\int \phi_t(x_t) (g_t(x_t) - f(x_t)) \lambda(dx_t) \quad \text{ and } \quad \int \varphi_t(y_t) (h_t(y_t) - f(y_t)) \lambda(dy_t);
	\end{equation*}
	
	\item Terms related to the martingale constraints, given by
	\begin{equation*}
		\int f(x_{1:N}, y_{1:N}) \alpha_t(x_{1:t}, y_{1:t}) (x_{t+1} - x_t) d\lambda
	\end{equation*}
	and 
	\begin{equation*}
		\int f(x_{1:N}, y_{1:N}) \beta_t(x_{1:t}, y_{1:t}) (y_{t+1} - y_t) d\lambda;
	\end{equation*}
	
	\item Terms related to the McCormick relaxations. For simplicity, we provide two examples, with others obtained similarly. The term related to the first inequality in the McCormick relaxation of the causality condition is
	\begin{align*}
		\int \gamma_{t, 1}(x_{1:N}, y_t)\big( & -w(x_{1:N}, y_t) + l(x_{1:N}, y_t) f(x_{1:t}) \\
		&  + f(x_{1:N}, y_t) l(x_{1:t}) - l(x_{1:N}, y_t) l(x_{1:t}) \big) \lambda(dx_{1:N}, dy_t).
	\end{align*}
	The term related to the first inequality in the McCormick relaxation of the anticausality condition is
	\begin{align*}
		\int \eta_{t, 1}(x_t, y_{1:N})\big( & -\tilde{w}(x_t, y_{1:N}) + l(x_t, y_{1:N}) f(y_{1:t}) \\
		&  + f(x_t, y_{1:N}) l(y_{1:t}) - l(x_t, y_{1:N}) l(y_{1:t}) \big) \lambda(dx_t, dy_{1:N}),
	\end{align*}
	where $\tilde{w}(x_t, y_{1:N})$ is the anticausal counterpart of $w(x_{1:N}, y_t)$;
	
	\item Terms related to the upper and lower bounds, given by
	\begin{equation*}
		\int \kappa(x_{1:N}, y_{1:N}) ( f(x_{1:N}, y_{1:N}) - u(x_{1:N}, y_{1:N})) d\lambda
	\end{equation*}
	and
	\begin{equation*}
		\int \theta(x_{1:N}, y_{1:N}) (l(x_{1:N}, y_{1:N}) - f(x_{1:N}, y_{1:N})) d\lambda;
	\end{equation*}
	
	\item The primal objective itself, i.e., $\int c(x_{1:N}, y_{1:N}) f(x_{1:N}, y_{1:N}) d\lambda$.
\end{enumerate}
Heuristically, with the Lagrangian above, we interchange the infimum with the supremum, group terms by $w(x_{1:N}, y_t)$, $\tilde{w}(x_t, y_{1:N})$, and $f(x_{1:N}, y_{1:N})$, and finally obtain the dual problem in \eqref{prob:dual}.

\end{document}